\newcommand{\R}{\mathbb R}
\renewcommand{\S}{\mathbb S}
\renewcommand{\phi}{\varphi}
\newcommand{\iks}{\mathfrak{X}}
\renewcommand{\div}{\operatorname{div}}
\renewcommand{\d}{\mathrm{\,d}}
\newcommand{\e}{\mathrm{e}}
\newcommand{\iid}{\stackrel{\mathrm{i. i. d.}}{\sim}}
\newcommand{\Beta}{\mathrm{B}}
\newcommand{\Eta}{\mathrm{H}}
\newcommand{\dcup}{\,\dot\cup\,}
\newcommand{\ind}{\mathbf{1}}
\newcommand{\sub}{\subseteq}
\newcommand{\dmin}{d_{\min}}
\DeclareMathOperator{\E}{\mathbb{E}}
\newtheorem{theorem}{Theorem}[section]
\theoremstyle{definition}
\newtheorem{definition}[theorem]{Definition}
\newtheorem{algo}[theorem]{Algorithm}
\newtheorem{remark}[theorem]{Remark}
\newtheorem{example}[theorem]{Example}
\begin{document}
	
	\title{Characteristic and Necessary Minutiae in Fingerprints} 
	
	\author{Johannes Wieditz\thanks{Institute for Mathematical Stochastics, Georg-August-Universtität Göttingen, Goldschmidtstraße 7, 37077 Göttingen, Germany, johannes.wieditz@uni-goettingen.de} \and Yvo Pokern\thanks{Department of Statistical Science, University College London, Gower Street, London WC1E 6BT, United Kingdom, y.pokern@ucl.ac.uk} \and 
		Dominic Schuhmacher\footnotemark[1] \and Stephan Huckemann\footnotemark[1]}
	
	\maketitle
	
	\begin{abstract}
		\textbf{Abstract } Fingerprints feature a ridge pattern with moderately varying ridge frequency (RF), following an orientation field (OF), which usually features some singularities. 
		Additionally at some points, called minutiae, ridge lines end or fork and this point pattern is usually used for fingerprint identification and authentication. 
		Whenever the OF features divergent ridge lines (e.g.\ near singularities), a nearly constant RF necessitates the generation of more ridge lines, originating at minutiae.
		We call these the necessary minutiae. It turns out that fingerprints feature additional minutiae which occur at rather arbitrary locations. We call these the random minutiae or, since they may convey fingerprint individuality beyond the OF, the characteristic minutiae. In consequence, the minutiae point pattern is assumed to be a realization of the superposition of two stochastic point processes: a Strauss point process (whose activity function is given by the divergence field) with an additional hard core,  and a homogeneous Poisson point process, modelling the necessary and the characteristic minutiae, respectively. We perform Bayesian inference using an MCMC-based minutiae separating algorithm (MiSeal). In simulations, it provides good mixing and good estimation of underlying parameters. In application to fingerprints, we can separate the two minutiae patterns and verify by example of two different prints with similar OF that characteristic minutiae convey fingerprint individuality.
		
		\noindent
		\emph{Keywords:} Bayesian inference, biometrics, classification, divergence, Markov Chain Monte Carlo, parameter estimation, spatial point processes. %
	\end{abstract}%
	
	\section{Introduction}
	\label{sec:intro}
	
	Authentication and identification by fingerprints is increasingly popular in a wide variety of applications, for example in smart phones and internet banking on the commercial side and in border control on the governmental side. Also on the side of forensics, fingerprint analysis is enjoying undiminished attention. For an overview, see \cite{maltoni_handbook_2009}. 
	
	Fingerprints feature a ridge line pattern inducing an undirected 
	\emph{orientation field} (OF), with zero to four singularities (zeroes or 
	poles of an underlying otherwise smooth and non-vanishing orientation 
	field, cf. \cite{huckemann_global_2008}, resulting in nearby ridges having 
	high curvature) called \emph{cores} (where neighbouring ridge lines go 
	around an ending ridge line) and \emph{deltas} (where three ridge lines 
	meet). The \emph{ridge frequency} (RF) varies moderately over the print and 
	changes rapidly only near singularities. Points where ridge lines fork or 
	end are called \emph{minutiae}. Figure~\ref{figure:minutia-types} provides 
	an example of a fingerprint featuring a double core called a \emph{whorl} 
	(around which ridge lines circle) and a delta near the right bottom corner. 
	
	\begin{figure}[b!]
		\centering
		\includegraphics[width = .9\textwidth]{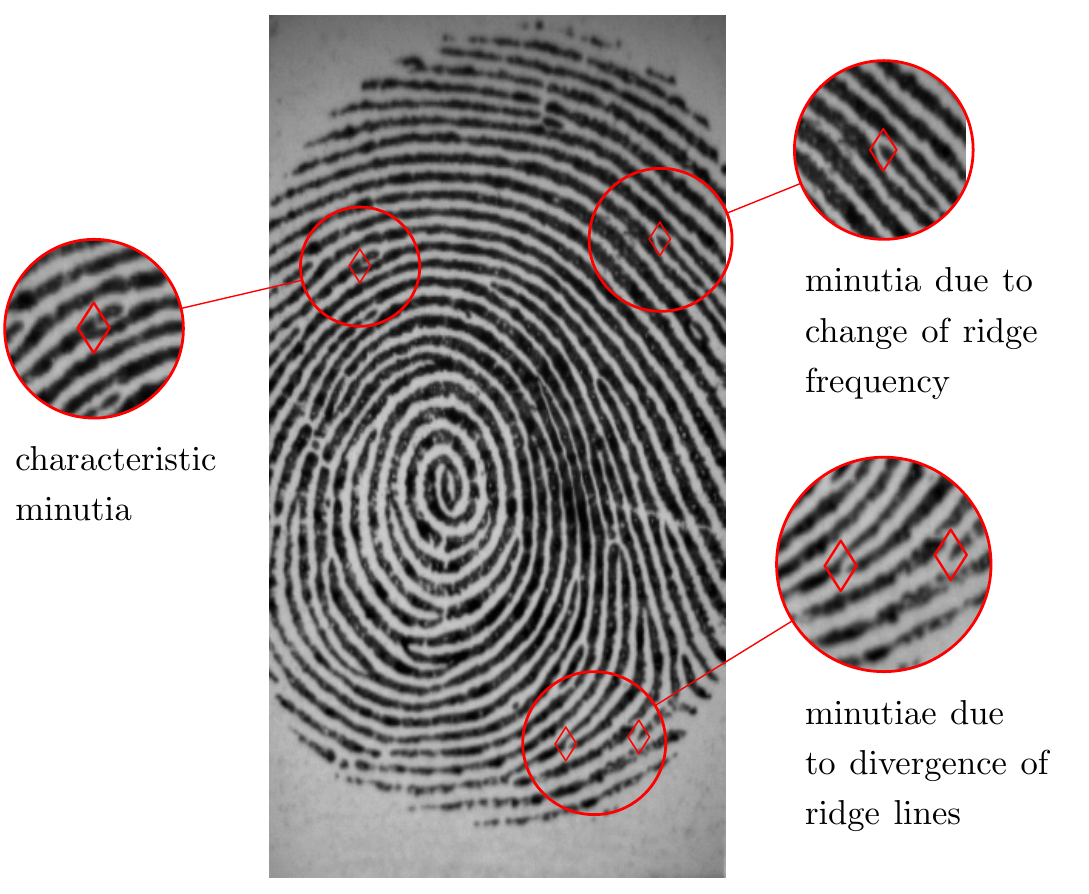}
		\caption{\it Different ``reasons'' for minutiae ($\Diamond$) in imprint 6 of finger 7 of DB2 in FVC2002 from \cite{maio_fvc2002_2002}.}
		\label{figure:minutia-types}
	\end{figure}
	
	To compare two fingerprints, usually each is reduced to its minutiae point pattern, often augmented by OF information (e.g.\ marking local orientation of the field). 
	Typically, challenges arise due to partial imprints (as in Figure~\ref{figure:minutia-types}, where, due to the global structure of fingerprint OFs, there is another delta further out on the lower left side, not observed in the print), low image quality and variable distortions arising from taking a planar image of a two-dimensional curved surface. In our work we demonstrate that a better understanding of the minutiae pattern has the potential of developing improved matching algorithms. 
	
	Remarkably, the details of the biological fingerprint formation process which takes place during early gestational weeks, are still largely in the dark. Modelling the formation of the OF with its singularities and of the RF, varying only within a small interval, by  expanding patterns satisfying suitable partial differential equations, \cite{KuckenNewell2007} observed in simulations that minutiae \emph{mostly occur in two circumstances}: when ridge lines diverge with new ridges inserted and when almost parallel ridge lines meet. They further observed that minutiae positions were quite sensitive to initial conditions and \cite{KuckenChampod2012} explained this biologically by \emph{small differences in the Merkel cell distribution}. 	These (random) differences can be very subtle as \cite{newman_finger_1930} 
	noted much earlier: 
	{\it There are, however, numerous instances in which the prints of two of more homologous fingers are so nearly identical as to be indistinguishable to the naked eye. [...]
		it is possible only by using considerable magnification to discover differences in the branching of ridges and breaks in ridge continuity. Differences of this sort, however, are certain to be found, and afford an easy means of identification.}
	
	Inspired by this, we focus theoretically and empirically on the interaction of minutiae, OF and RF and argue that OF divergence and RF changes geometrically necessitate minutiae which we hence call \textit{necessary} minutiae. Statistical analysis endorses the above mentioned observations of additional \emph{random} minutiae which are independent of the underlying smoothed OF and RF, cf.\ Figure \ref{figure:twin-fingerprints} which shows prints of monozygotic twins from~\cite{newman_finger_1930}.

	\begin{figure}[b!]
		\centering
		\begin{subfigure}{.48\linewidth}
			\centering
			\includegraphics[width=\linewidth]{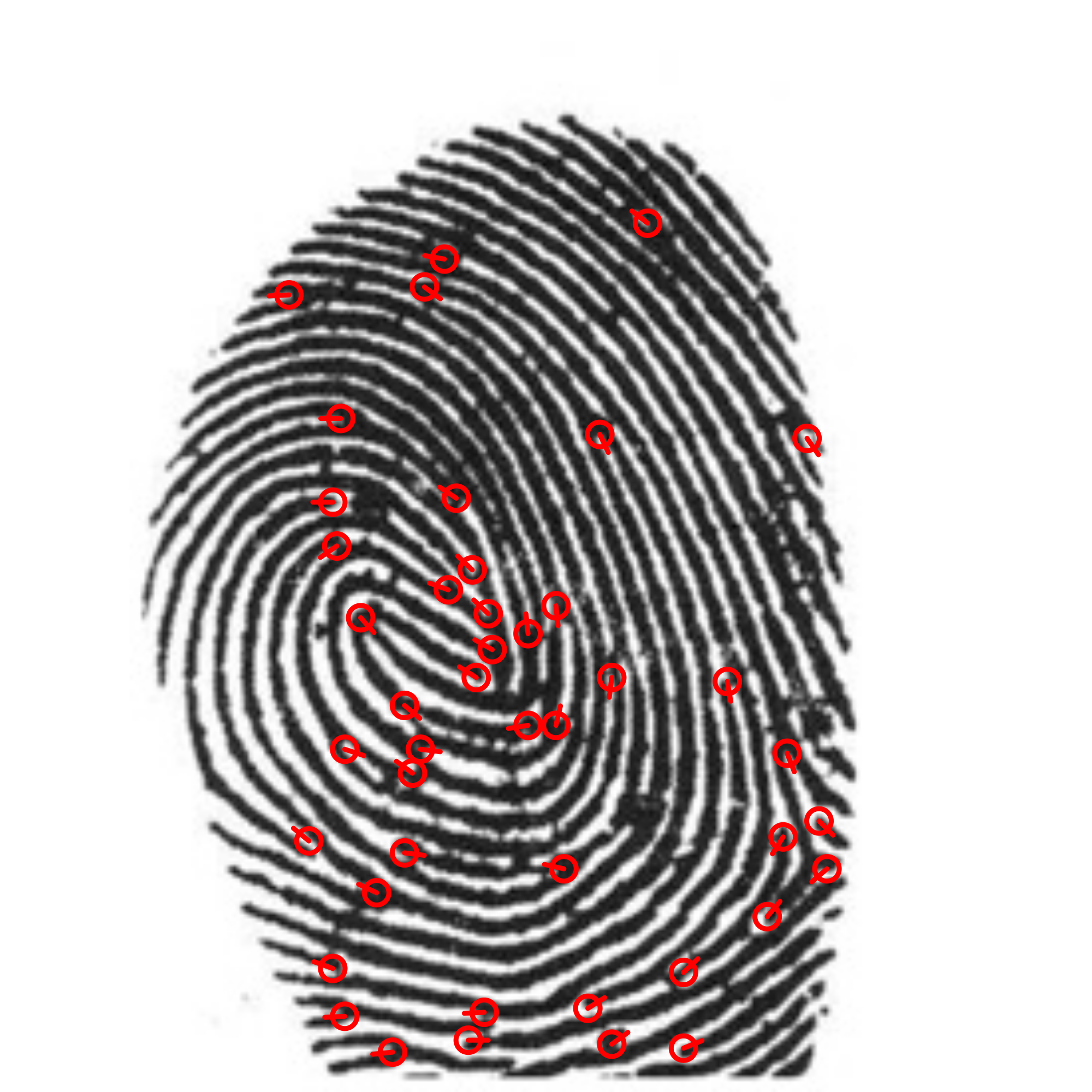}
			\caption{}
			\label{figure:twin2a}			
		\end{subfigure}%
		\hfill
		\begin{subfigure}{.48\linewidth}
			\centering
			\includegraphics[width=\linewidth]{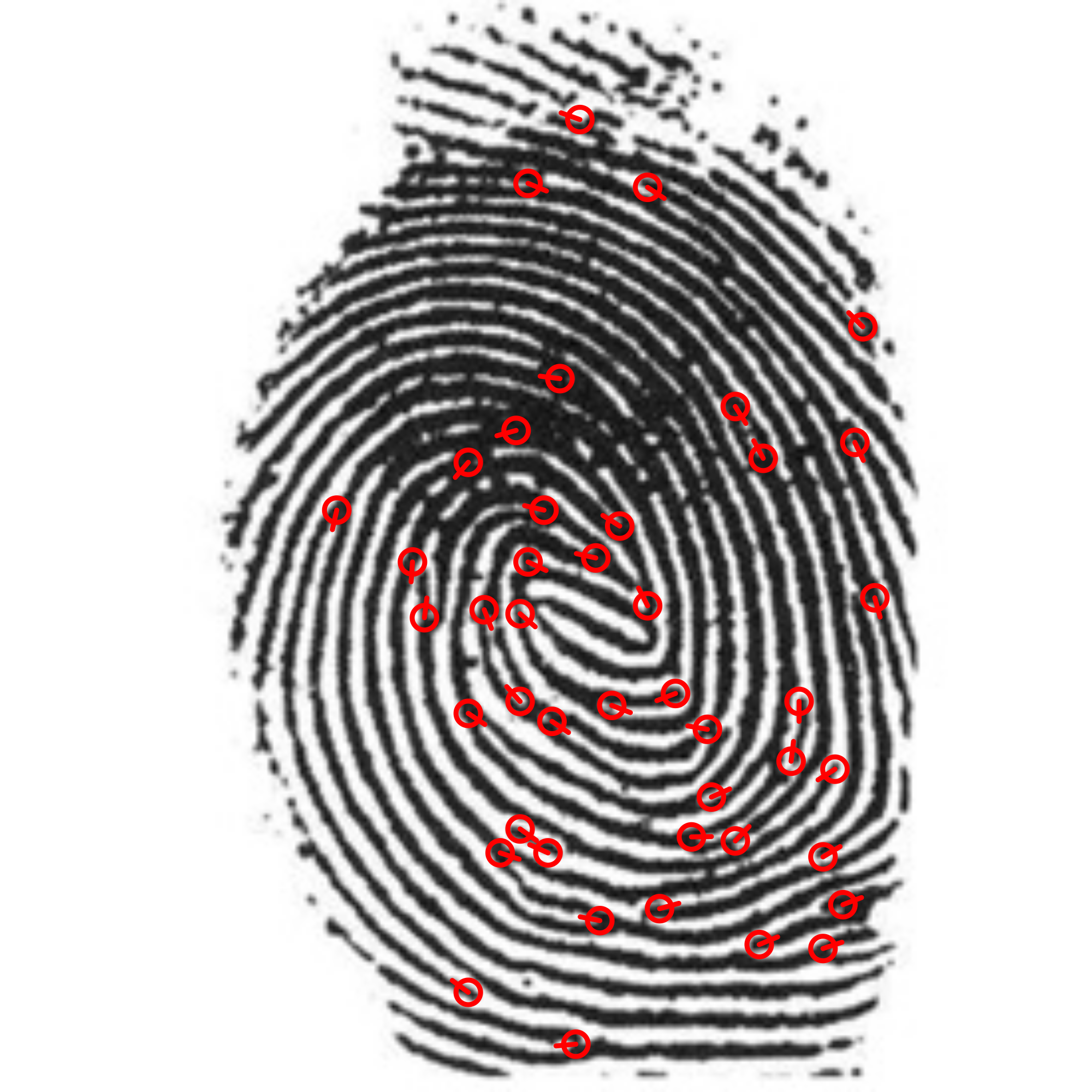}
			\caption{}
			\label{figure:twin2b}
		\end{subfigure}\textbf{}
		\caption{\textit{Fingerprints of each of one twin of a pair of monozygotic twins from~\cite{newman_finger_1930}, labelled 14a and 14b. Even though the fingerprints between the twins resemble in OF and RF, they do exhibit different minutiae (here: manually marked) e.\ g.\ in the region left of the delta in the lower right corner.}}
		\label{figure:twin-fingerprints}
	\end{figure}
	
	For imprints with similar OF and RF leading to similar necessary minutiae point patterns, these random minutiae may carry individuality information and for this reason, we also call them \emph{characteristic} minutiae. 
	
	In the following, we lay the mathematical foundations and provide an algorithm separating the superposition of an inhomogeneous Strauss point process (modelling the necessary minutiae) and a homogeneous Poisson process (modelling the random minutiae) which is an improved and extended version of \cite{redenbach_classification_2015,rajala_variational_2016}, which separated a homogeneous Strauss point process from a homogeneous Poisson point process using \emph{Markov-Chain-Monte-Carlo} (MCMC) and variational Bayes methods.
	In application to manually re-marked fingerprints our minutiae separating algorithm (MiSeal) finds the presence of random minutiae. Furthermore, in an exemplary analysis of two different imprints with similar OF, we find that these minutiae are indeed characteristic: excluding them results in more similar minutiae patterns than excluding the same number of minutiae at random.
	
	Application of our methods to a large fingerprint database and thus to a use case (e.g. improved identification in forensics) is beyond the scope of this work because automatic minutiae extraction is often imprecise in practice. Thus, our algorithms would require tuning and additional adaptation to such applications which is the subject of current research and left for future publications.
	
	OF, RF and minutiae extraction as well as matching procedures are well kept proprietary secrets of commercials firms. Instead of relying on such closed-source codes, we  
	binarize fingerprint images using the algorithm from \cite{thai_filter_2016}, estimate OF and RF using our own implementation in Java based on~\cite{hong_fingerprint_1998}  and manually extract minutiae.
	
	The extracted point patterns have been compared using the \emph{minutiae cylinder code} (MCC) from  \cite{cappelli_mcc_2010}. Our software can be found at~\cite{wieditz_fingerprint_2020}.
	
	To the best of our knowledge, including divergence information in minutiae matching has not gained attention in the literature so far. Conversely, the observation that minutiae cause high local divergence of the ridge flow field has been used in \cite{nikodemusz-szekely_image_1993} to locate them.
	
	In the following Section~\ref{sec:mdf} we formalize the concepts of OF and RF divergence for fingerprints. In Section~\ref{sec:exofrandom} we provide evidence for the existence of random minutiae. In Section~\ref{sec:model} we specify our point process models and in Section~\ref{sec:mcmc} we develop the MCMC-based algorithm MiSeal for separation and estimation of the model's parameters. With the help of simulated data based on divergence maps estimated from 20 high quality fingerprints of the database FVC2002 DB1, cf.\ \cite{maio_fvc2002_2002}, MiSeal is evaluated in Section~\ref{sec:test} and we show that the database's fingerprints indeed feature necessary and random minutiae. Section~\ref{sec:proof} gives an example of two similar fingerprints from the two monozygotic twins from Figure~\ref{figure:twin-fingerprints} where the random minutiae carry characteristic information for distinguishing the prints.

	\section{A Formula for Necessary Minutiae}
	\label{sec:mdf}
	
	We model the OF induced by an observed ridge line pattern as detailed in~\cite{huckemann_global_2008} by a unit length orientation field $O: \iks \to \mathbb{R}P^1 \cong \S^1$ (angles between $0$ and $\pi$ in the real projective space of dimension one -- which is topologically a circle) that is well-defined, non-vanishing and $\mathcal{C}^2$ apart from isolated singularities: zeros of $O$ result in deltas in the ridge pattern and poles of $O$ result in cores and whorls, cf. \cite{huckemann_global_2008}. Here, $\iks\sub \R^2$ denotes the \textit{region of interest}, i.e.\ the part of the image containing the fingerprint, which is assumed to be compact with piecewise smooth boundary.\footnote{Here and later on, we mean by this that the boundary is the image of a simple closed curve that is piecewise $\mathcal{C}^2$.} Every orientation in $O$ has two well defined directions, namely the original orientation and the original orientation plus $\pi$. In simply connected regions $A\subseteq \iks$ not containing any of the singularities, we can pick a continuous selection of directions from $O$ which we call $\vec F = \vec F_A$. See~\cite{huckemann_global_2008} for more details. 
	
	Furthermore, the ridge pattern features a locally varying ridge frequency (RF), which we model as a $\mathcal{C}^2$-function $\Phi: \iks \to (0,\infty)$. Algorithmically, $\Phi(z)$ is obtained by centring a line segment at $z$ that is aligned orthogonally to $O$ and dividing the number of ridges the line segment crosses by its length; subsequent smoothing results in a $\mathcal{C}^2$-function. Empirically, $\Phi$ varies only within a small interval (its inverse, the inter-ridge distance, is between 6 and 15 pixels in all of the data considered).
	
	Ideally, the integral 
	$$ \int_\gamma \Phi(z) \d z$$
	along a curve $\gamma$ orthogonal to the field $O$ closely approximates the number of ridges crossing that curve. This motivates the following general definition which is further illustrated in Examples \ref{ex:1} and \ref{ex:2} below; see also Figure \ref{figure:example-annular-sector}.
	
	\begin{definition}
		For simply connected compact $A\subseteq \iks$ that does not contain any of the singularities of $\vec F$ and has piecewise smooth boundary $\partial A$ with (piecewise well-defined) outwards pointing normal $\vec n:\partial A \to \S^1$, we call		
		\begin{align}
		\label{def:1}
		m(A) := \left| \int_{\partial A} \Phi(z)\left\langle \vec F(z), \vec{n}(z) \right\rangle \d z \right|
		\end{align}
		the (usually non-integer-valued) \textit{number of geometrically 
			necessary minutiae} in $A$, for short the \textit{necessary minutiae 
			number}.
	\end{definition}
	
	As $A$ contains no singularities, ridges near $A$ carry a common directional flow induced by $\vec F$. Then the  necessary minutiae number $m(A)$ counts the absolute difference of numbers of ridges entering $A$ and leaving $A$, each weighted by the cosine of the angle between ridge and outwards pointing normal (counted fully if they intersect the boundary of $A$ perpendicularly). Taking the absolute value of the difference provides independence of the particular flow direction $\vec F$ of $O$ chosen. Thus $m(A)$ counts the number of minutiae \emph{necessary} due to the geometry of the OF and the RF, see also Figure~\ref{figure:example-annular-sector}. Minutiae in $A$ annihilating each other, e.g.\ due to a ridge beginning and ending in $A$, are not counted.

	Applying the divergence theorem, see \cite[Theorem~XII.3.15 and Remark~XII.3.16(c)]{amann3engl} or \cite[Theorem~16.7]{adams_calculus_2016} for a more direct formulation, we obtain	
	\begin{align*}
	\notag \int_{\partial A} \left\langle \Phi(z) \vec F(z), \vec{n}(z) \right\rangle \d z &= \int_A \div\left( \Phi \vec F \right) (z) \d z \\
	&= \int_A \Phi(z) \div \vec F(z) \d z + \int_A \left\langle \nabla \Phi(z), \vec F(z) \right\rangle \d z ,
	\end{align*}
	yielding
	\begin{align}
	\label{eq:thm3}
	m(A) = \left| \int_{A} \Phi(z) \div \vec F(z) \d z + \int_A \left \langle \nabla \Phi(z),  \vec F(z) \right \rangle \d z \right|.
	\end{align}	
	The first term, 
	$$ \int_{A} \Phi(z) \div \vec F(z) \d z, $$ 
	captures the effect of the \textit{OF divergence}, whereas the second term,
	$$\int_A \left \langle \nabla \Phi(z),  \vec F(z) \right \rangle \d z,  $$
	captures the \textit{RF divergence}. 
	
	It may happen that, following the field in one direction, the inter-ridge distances decrease as the field lines converge (e.g.\ lines and the spaces between them get thinner). Then, RF divergence and OF divergence have different signs, nearly cancelling each other, yielding $m(A) \approx 0$. The minutiae number is always non-negative due to the absolute values taken in (\ref{def:1}) and (\ref{eq:thm3}), in particular making the sum of divergences in (\ref{eq:thm3}) independent of the specific direction chosen.
	
	Recall that a set $A \subseteq \mathbb{R}^2$ is called \emph{star-shaped} with respect to $z_0 \in \mathbb{R}^2$ if $t z_0 + (1-t) z \in A$ for all $z \in A$ and $t \in [0,1]$, cf. \cite[p.~314~ff.]{amann2engl}. We write $r(A) := r_{z_0}(A) := \sup_{z \in A} \|z-z_0\|$ for the radius of such a set and $|A|$ for its area (if it is measurable).
	
	\begin{theorem}\label{thm:mf}
		Let $z_0 \in \iks$ be fixed.
		Suppose that $A\subseteq \iks$ is a compact set that is star-shaped w.r.t.\ $z_0$, does not contain any of the singularities of $\vec F$ and has piecewise smooth boundary~$\partial A$. Then		
		\begin{align*}
		m(A) =  \left|\Phi(z_0) \div \vec F(z_0) + \left\langle \nabla \Phi(z_0), \vec F(z_0) \right\rangle\right| \cdot |A| + o(|A|) \quad \text{as $r(A) \to 0$.}
		\end{align*}
	\end{theorem}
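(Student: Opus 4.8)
The plan is to work from the divergence form (\ref{eq:thm3}) of $m(A)$ and to reduce the statement to the continuity of the integrand at $z_0$. Since we only care about the limit $r(A)\to 0$, we may assume that $A$ is contained in a fixed, simply connected, singularity-free neighbourhood $U\subseteq\iks$ of $z_0$ (such $U$ exists because $z_0\in A$ is not a singularity and the singularities of $\vec F$ are isolated); on $U$ we fix once and for all a continuous direction selection, so that $g:=\div(\Phi\vec F)=\Phi\div\vec F+\langle\nabla\Phi,\vec F\rangle$ is a fixed, continuous (indeed $\mathcal{C}^1$) function on $U$, and (\ref{eq:thm3}) reads
\begin{align*}
m(A)=\Bigl|\int_A g(z)\,\d z\Bigr|,
\end{align*}
the outer absolute value absorbing the sign ambiguity of the selection. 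Note that $A$, being star-shaped w.r.t.\ $z_0$, contains $z_0$, and, being compact with piecewise smooth boundary, is Jordan measurable, so $|A|$ and all integrals below are meaningful.

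Next I would expand the integral around its value at $z_0$:
\begin{align*}
\int_A g(z)\,\d z = g(z_0)\,|A| + R(A),\qquad R(A):=\int_A\bigl(g(z)-g(z_0)\bigr)\,\d z,
\end{align*}
and estimate the remainder by
\begin{align*}
|R(A)| \le \int_A|g(z)-g(z_0)|\,\d z \le |A|\cdot\sup_{z\in A}|g(z)-g(z_0)|.
\end{align*}
Since every $z\in A$ satisfies $\|z-z_0\|\le r(A)$, continuity of $g$ at $z_0$ gives, for each $\eps>0$, some $\delta>0$ such that $\sup_{z\in A}|g(z)-g(z_0)|\le\eps$ whenever $r(A)<\delta$; hence $|R(A)|\le\eps\,|A|$ for all admissible $A$ with $r(A)<\delta$. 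As this bound depends on $A$ only through $r(A)$, it is exactly the assertion $R(A)=o(|A|)$ as $r(A)\to 0$, uniformly over the family of admissible sets.

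It remains to reinstate the outer absolute value. Writing $c:=g(z_0)$, we have $m(A)=\bigl|\,c\,|A|+R(A)\,\bigr|$, so the reverse triangle inequality yields
\begin{align*}
\bigl|\,m(A)-|c|\cdot|A|\,\bigr| = \Bigl|\,\bigl|c\,|A|+R(A)\bigr|-|c|\cdot|A|\,\Bigr| \le |R(A)| = o(|A|),
\end{align*}
i.e.\ $m(A)=|c|\cdot|A|+o(|A|)=\bigl|\Phi(z_0)\div\vec F(z_0)+\langle\nabla\Phi(z_0),\vec F(z_0)\rangle\bigr|\cdot|A|+o(|A|)$, as claimed.

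I do not anticipate a genuine obstacle: the argument is essentially a continuity (zeroth-order Taylor) estimate for the density $g=\div(\Phi\vec F)$. The points deserving care are (i) that $z_0$ is a point of continuity of $g$, which is where the hypotheses ``$A$ star-shaped w.r.t.\ $z_0$'' (forcing $z_0\in A$) and ``$A$ singularity-free'' enter, together with the reduction to a fixed neighbourhood $U$ so that the region-dependent selection $\vec F=\vec F_A$ causes no ambiguity; and (ii) interpreting ``$o(|A|)$ as $r(A)\to 0$'' correctly, namely uniformly over all admissible sets of small radius and relative to $|A|$, so that thin, small-area sets pose no problem.
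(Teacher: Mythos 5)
Your proof is correct and follows essentially the same route as the paper's: both pass to the divergence form $m(A)=\bigl|\int_A \div(\Phi\vec F)(z)\d z\bigr|$, apply the reverse triangle inequality to remove the outer absolute values, and bound the remainder $\int_A (g(z)-g(z_0))\d z$ by $|A|$ times a quantity vanishing as $r(A)\to 0$. The only difference is in that last estimate: the paper invokes the multivariate mean value theorem together with boundedness of $\nabla\div(\Phi\vec F)$ (which forces it to fuss over a measurable selection of intermediate points $\xi_z$), whereas you use only continuity of the integrand at $z_0$, which is slightly more elementary, sidesteps that technicality, and is where your hypotheses $z_0\in A$ and $\|z-z_0\|\le r(A)$ enter in exactly the same way.
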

	
	\begin{proof}
		Since $A$ is simply connected, compact and does not contain any of the singularities of $\vec F$, the function $f(z) = \div(\Phi \vec F)(z)$ is $\mathcal{C}^1$ with bounded derivative where the derivative is to be suitable interpreted at boundary points of $\iks$. Using the reverse triangle inequality and the multivariate mean value theorem we have for every $z \in A$, 
		\begin{align*}
		&\left| m(A) - \left|\Phi(z_0) \div \vec F(z_0) + \left\langle \nabla \Phi(z_0), \vec F(z_0) \right\rangle\right| \cdot |A| \right| =
		\left| \left|\int_A f(z) \d z\right| - |f(z_0)||A| \right|\\ & \le \left| \int_A f(z) \d z - f(z_0)|A| \right| =\left| \int_A \nabla f(\xi_z)^\top(z-z_0) \d z \right| \\
		&\le \sup_{\xi \in A} \left\| \nabla f(\xi) \right\| \, \sup_{z\in A} \left\| z-z_0 \right\| \, |A|.
		\end{align*}
		Now, dividing by $|A|$ and letting $r(A) \to 0$ yields the assertion. Above, $\xi_z$ is a measurable selection from the measurable set $\{\xi \in A: 0 = f(z) - f(z_0) -   \nabla f(\xi)^\top(z-z_0)\}$, for instance one with minimal first and, if necessary, also with minimal second component.	
	\end{proof}
	
	This theorem motivates the definition of the necessary minutiae intensity governed by the sum of local OF divergence and local RF divergence.
	
	\begin{definition}
		For $z_0 \in \iks$ outside the set of singularities of $\vec F$, call		
		\begin{align}
		\label{eq:min.estimate}
		\mu(z_0) =  \left|\Phi(z_0) \div \vec F(z_0) + \left\langle \nabla \Phi(z_0), \vec F(z_0) \right\rangle\right| 
		\end{align}
		the \textit{intensity of necessary minutiae} at $z_0$.
	\end{definition} 
	
	\begin{figure}[t!]
		\centering
		\begin{subfigure}[t]{0.5\textwidth}
			\centering
			\includegraphics{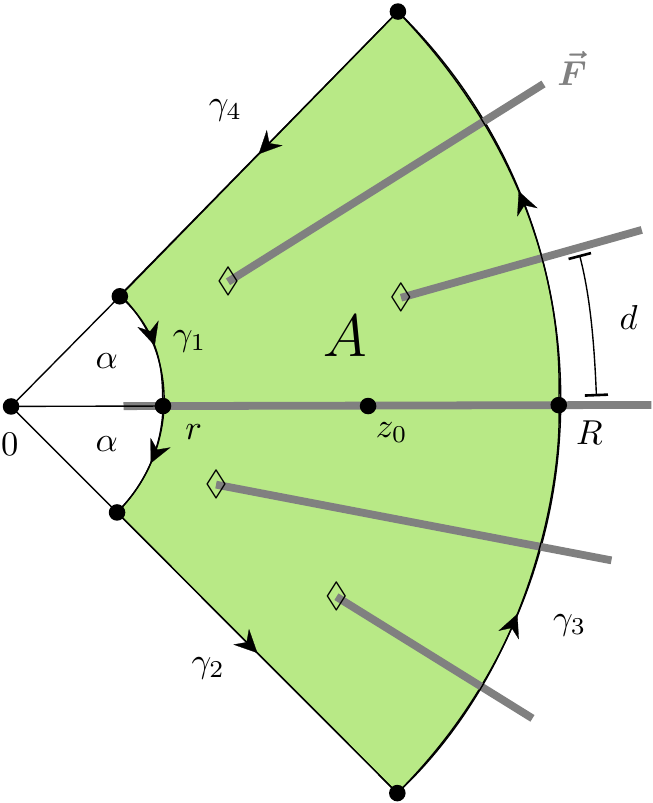}
			\caption{}
			\label{figure:example-annular-sector-a}
		\end{subfigure}%
		~ 
		\begin{subfigure}[t]{0.5\textwidth}
			\centering
			\includegraphics{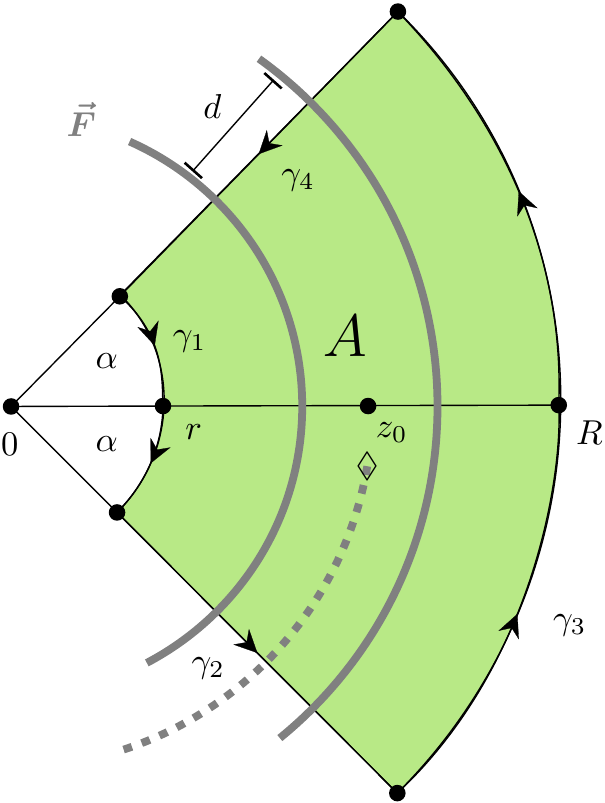}
			\caption{}
			\label{figure:example-annular-sector-b}
		\end{subfigure}
		\caption{\it Ridge pattern within an annular sector $A$ (green) around $z_0$ (on the first axis between $r$ and $R$)  generated by the OF $\vec F: z=(x,y) \mapsto \frac{(x,y)}{\|z\|}$ (left) and the field $\vec F: z=(x,y) \mapsto \frac{(y,-x)}{\|z\|}$ (right), which is orthogonal to the field on the left. The number of minutiae in $A$ is given by the number of new ridges emerging in~$A$~($\Diamond$).}
		\label{figure:example-annular-sector}
	\end{figure}
	
	\begin{example}
		\label{ex:1}
		Fix $\varepsilon >0$, consider $\iks = \{z=(x,y)\in \R^2 : \|z\| \geq \varepsilon\}$ and $\vec F: \iks \to \S^1, z\mapsto \frac{z}{\|z\|}$ pointing radially away from the origin. For $z_0\in \iks$ and $\varepsilon<r< R$ consider an annular sector		
		\begin{align}
		\label{eq:6}
		A := \left\{ z \in \R^2 \,: \, \big\vert \, \angle(z, z_0)\,\big\vert \le \alpha, r \leq \|z\| \leq R \right\}
		\end{align}
		of opening angle $\alpha \in [0,\frac\pi2]$. Figure~\ref{figure:example-annular-sector} shows the situation for a nearly constant RF $\Phi(z)\approx \frac{1}{d}$ where one ridge line enters from the left and five ridge lines leave on the right, giving rise to four minutiae (marked with $\Diamond$) inside $A$. Along the circular arcs $\gamma_1$ (of length $2\alpha r$) and $\gamma_3$ (of length $2\alpha R$) the outwards pointing normal $\vec n$ of $\partial A$ is first antiparallel and then parallel to the field, while on the radial arcs $\gamma_2$ and $\gamma_4$ the outwards pointing normal $\vec n$ of $\partial A$ is orthogonal to the field. With $\Phi(z) = \frac{1}{d}$, this gives
		$$ m(A) =  2\alpha\, \frac{R-r}{d}\,.$$
		Indeed, with $\partial_x \frac{x}{\sqrt{x^2+y^2}} = \frac{y^2}{(x^2+y^2)^{3/2}}$ and $\partial_x \frac{y}{\sqrt{x^2+y^2}} = \frac{x^2}{(x^2+y^2)^{3/2}}$ we have simply $ \div \vec F(z) = \frac{1}{\|z\|}$, and in the presence of OF divergence only, introducing polar coordinates,
		$$ \int_A \Phi(z) \,\div \vec F(z)\,\d z = \frac{2\alpha}{d} \int_r^R\frac{\rho \d \rho}{ \rho} = 2\alpha\, \frac{R-r}{d}\,.$$
		
		Supposing that there were fewer than four minutiae observed in Figure \ref{figure:example-annular-sector-a}, then fewer than five ridge lines would cross $\gamma_3$. This would necessitate a lower ridge frequency on $\gamma_3$ than on $\gamma_1$, yielding $\langle \nabla \Phi(z),\vec F(z)\rangle < 0 < \div \vec F(z)$, so that the OF divergence would be cancelled partially (or in total) by the RF divergence.
	\end{example}
	
	\begin{example}
		\label{ex:2}
		With $\iks$ and $A$ from Example \ref{ex:1}, consider now the field $\vec F: \iks \to \S^1, z=(x,y) \mapsto \frac{(y,-x)}{\|z\|}$, which is perpendicular to the field from Example \ref{ex:1}, cf. Figure~\ref{figure:example-annular-sector-b}. Since $\div \vec F(z) = 0$, this field is divergence free and for constant ridge frequency $\Phi(z) = \frac{1}{d}$ we do not observe any minutiae in $A$, i.e.	
		$$m(A) = 0\,.$$
		Indeed, now the field is orthogonal to the outwards pointing normals of $\partial A$ along the circular arcs $\gamma_1$ and $\gamma_3$ while it is parallel and antiparallel, respectively, on the radial arcs $\gamma_2$ and $\gamma_4$, which are of equal lengths, so their contributions to $\Phi(z) \langle \vec F(z),\vec n(z)\rangle$ cancel. 
		
		If a minutia was observed within $A$, then it would be due to the RF divergence of the non-constant RF, namely of RF higher on $\gamma_2$ than on $\gamma_4$, as depicted with the dotted ridge in Figure \ref{figure:example-annular-sector-b}. 
	\end{example}
	
	\section{The Existence of Random Minutiae}
	\label{sec:exofrandom}
	
	Having found a formula predicting the number of necessary minutiae given the OF's and the RF's divergence, we investigate in this section whether there are additional minutiae in fingerprint patterns not explained by OF and RF divergence.
	
	To this end, we preprocess 20 high quality fingerprints from the database FVC2002~DB1\footnote{fingers labelled
	1\_1,
	2\_8,
	7\_1,
	9\_8,
	13\_4,
	22\_4,
	25\_2,
	26\_2,
	28\_4,
	31\_5,
	34\_1,
	35\_6,
	53\_6,
	57\_3,
	59\_2,
	65\_4,
	66\_2,
	76\_6,
	89\_5 and
	100\_6}
	using the algorithm from \cite{thai_filter_2016} in order to obtain 
	enhanced and binarized versions of the images and the regions of interest. 
	We then manually mark the minutiae, subdivide each image into approx. $100$ 
	rectangular patches $A$ (aspect ratio taken from the images), cf. Figure 
	\ref{figure:Intensity-Twins} (there the patches have been chosen a little 
	smaller), and count the number of minutiae in these patches.
	
	For comparison, we compute the intensity of necessary minutiae based on the approximation in Theorem~\ref{thm:mf} using $\Phi, \nabla \Phi, \vec F, \div \vec F$ which, in turn, are obtained by smoothing with a Gaussian kernel. Patches too close to a singularity are discarded, because due to large derivatives of these quantities the approximations are typically bad; see the proof of Theorem~\ref{thm:mf}.  
	
	The black dots in Figure~\ref{figure:PoissonRegression} display the actual minutiae counts against the necessary minutiae numbers $m(A)$ over the different patches $A$ in all 20 imprints, i.e.\  $m(A)$ is the (not necessarily integer valued) number of minutiae we expect in $A$ if there are only necessary minutiae. We perform a Poisson regression with 
	identity link, i.e.\ we model the expectation $\mu(A)$ of the actual minutiae count in $A$ 
	as $\mu(A) = \beta_0 + \beta_1 m(A)$ and determine $\hat \beta_0, \hat 
	\beta_1$ by maximum likelihood estimation. As a word of caution we point out that the difference of the data to its regression line is indeed consistent with a Poisson regression for count data, see e.g.\ \cite{cameron_regression_2013}; we do not 
	perform ordinary least squares regression here. 
	
	\begin{figure}[t!]
		\centering
		\includegraphics[width =\linewidth]{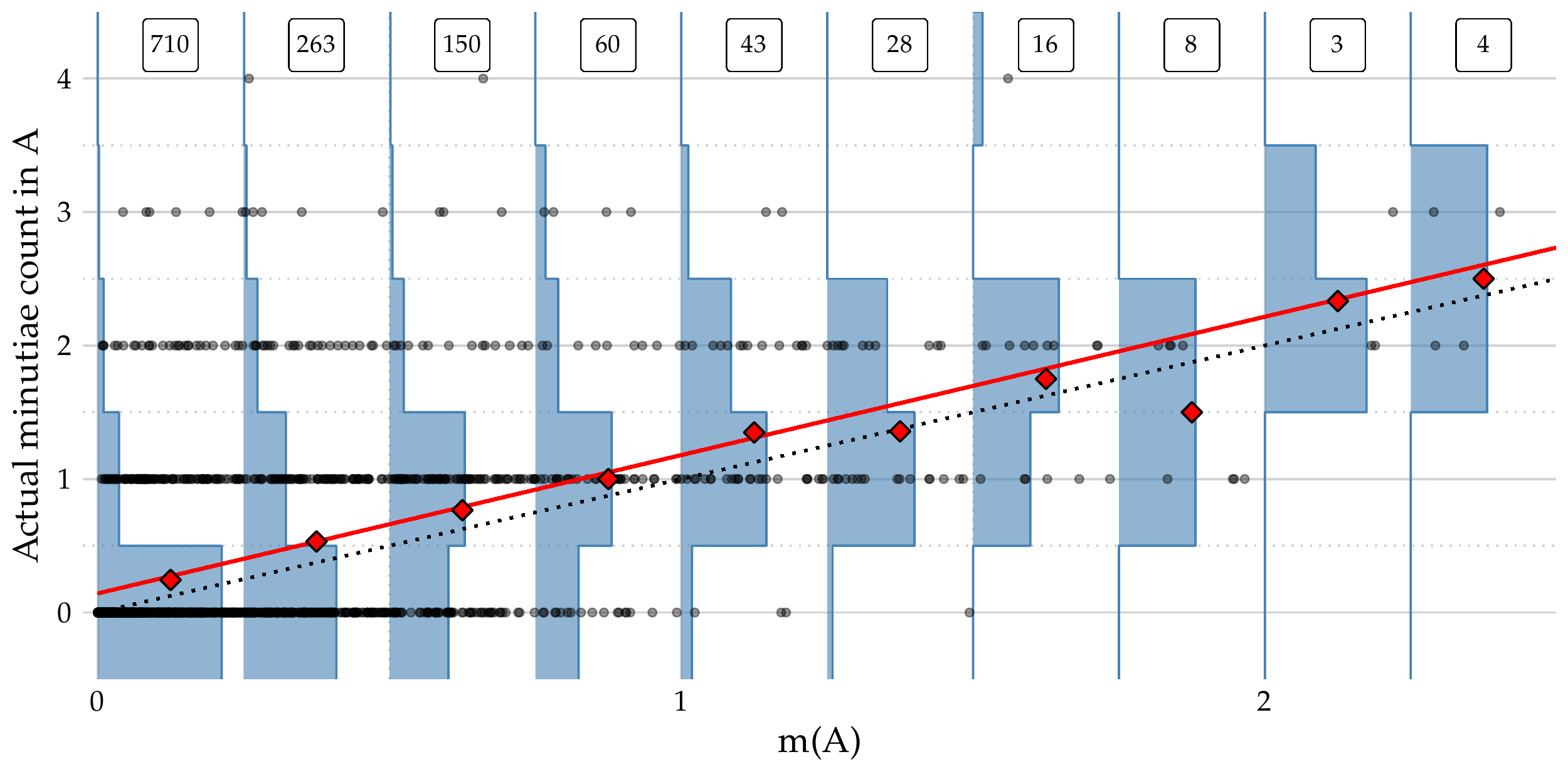}
		\caption{\textit{Poisson regression (red line) for actual minutiae count (black dots) in patches $A$ (of approx 1451 pixels, cf. Figure \ref{figure:Intensity-Twins} and Remark~\ref{rem:lambda0}) versus necessary minutiae number $m(A)$ computed by Formula~\eqref{eq:thm3}. The dotted black line (identity function) shows the relation we would expect if there were no random minutiae. The probability mass function of the counts within bins of width 0.25 is depicted in blue; their mean is indicated as red diamond ($\diamond$). The number of observations within the bins is written on top. Based on 20 high quality fingerprints from FVC2002 DB1.}}
		\label{figure:PoissonRegression}
	\end{figure}
	
	The Poisson regression line in Figure~\ref{figure:PoissonRegression} (red line), which is surprisingly well defined by the means of the massive histograms, confirms that on average the actual minutiae count increases with a slope close to one (95\% confidence interval $[0.903, 1.175]$) with a significant intercept of $0.14$ (95\% confidence interval $[0.106, 0.184]$; $p<10^{-12}$), indicating that the actual number of minutiae is larger than the number of minutiae necessary based on OF and RF divergence (dotted black line). We refer to the additional minutiae as \textit{random minutiae}. 
	
	In the following sections, we investigate the separation of the two minutiae point processes and in Section~\ref{sec:proof} we show that random minutiae can be \textit{characteristic} in the sense that they can provide valuable information for distinguishing fingerprints with similar OFs.
	
	\begin{remark}\label{rem:lambda0}
		Comparing the intercept of $0.14$ random minutiae per patch to the average number of $0.50$ total minutiae per patch, we obtain the rule of thumb that, out of $7$ minutiae, $5$ are necessary and $2$ are random.\\
		The images in FVC2002~DB1 have size $388 \times 374$ pixels (at a resolution of 500 dpi). Using the area of one pixel as our spatial unit (each pixel is a $0.0508 \times 0.0508\,\mathrm{mm}^2$ square) we conclude from the fact that the average patch size is approximately $1451$ pixels that we may use
		\begin{equation}\label{eq:lambda0}
		\lambda_0 \approx 10^{-4}
		\end{equation}
		as an initial estimate for the random minutiae intensity.
	\end{remark}
	
	\section{Modelling Necessary and Random Minutiae}
	\label{sec:model}
	
	We assume that a minutiae pattern $\left\{z_1, z_2,\dots, z_k\right\} \sub \iks $ is a sample of the superposition of two independent point processes $\Xi$ and $\Eta$ modelling the random and the necessary minutiae, respectively; for an introduction to point processes, see e.g.~\cite{moller_statistical_2003}.
	
	It is well known, see e.g.~\cite{stoney_distribution_1988,chen_statistical_2006,gottschlich_separating_2014}, that minutiae cannot be arbitrarily close to one another; they repel each other on a local scale. Indeed, due to the discrete nature of the ridge pattern, we cannot observe minutiae pairs at distance smaller than the inter-ridge distance. Although on good quality fingerprints, upon close inspection, occasionally closer minutiae pairs can be seen, e.g.\ bifurcations with one very short ridge, as these cannot be well discriminated from noise, they are usually removed as \emph{false minutiae}, cf. \cite[p.~157--158]{maltoni_handbook_2009}). This minimal distance effect is well visible in Figure~\ref{figure:PCF} showing the \emph{pair correlation function} (PCF) with approximate pointwise 95\% confidence intervals, estimated from the 20 hand-marked fingerprints considered in Figure~\ref{figure:PoissonRegression} using \cite[Subsections 7.10.2 and 16.8.2]{baddeley_spatial_2015} (adjusting for the inhomogeneous intensities). Intuitively, the PCF shows the ratio between the probability of observing a pair of points at a given distance and the same probability assuming independent occurrence of points. In particular, for a general Poisson process the PCF is constant one (dotted green line in Figure~\ref{figure:PCF}). Inhibition of points leads to values $<1$ and excitation of points to values $>1$. For a precise definition of the PCF, see \cite[Definitions 4.3--4.4]{moller_statistical_2003}.
	
	\begin{figure}[h]
		\centering
		\includegraphics[width = \textwidth]{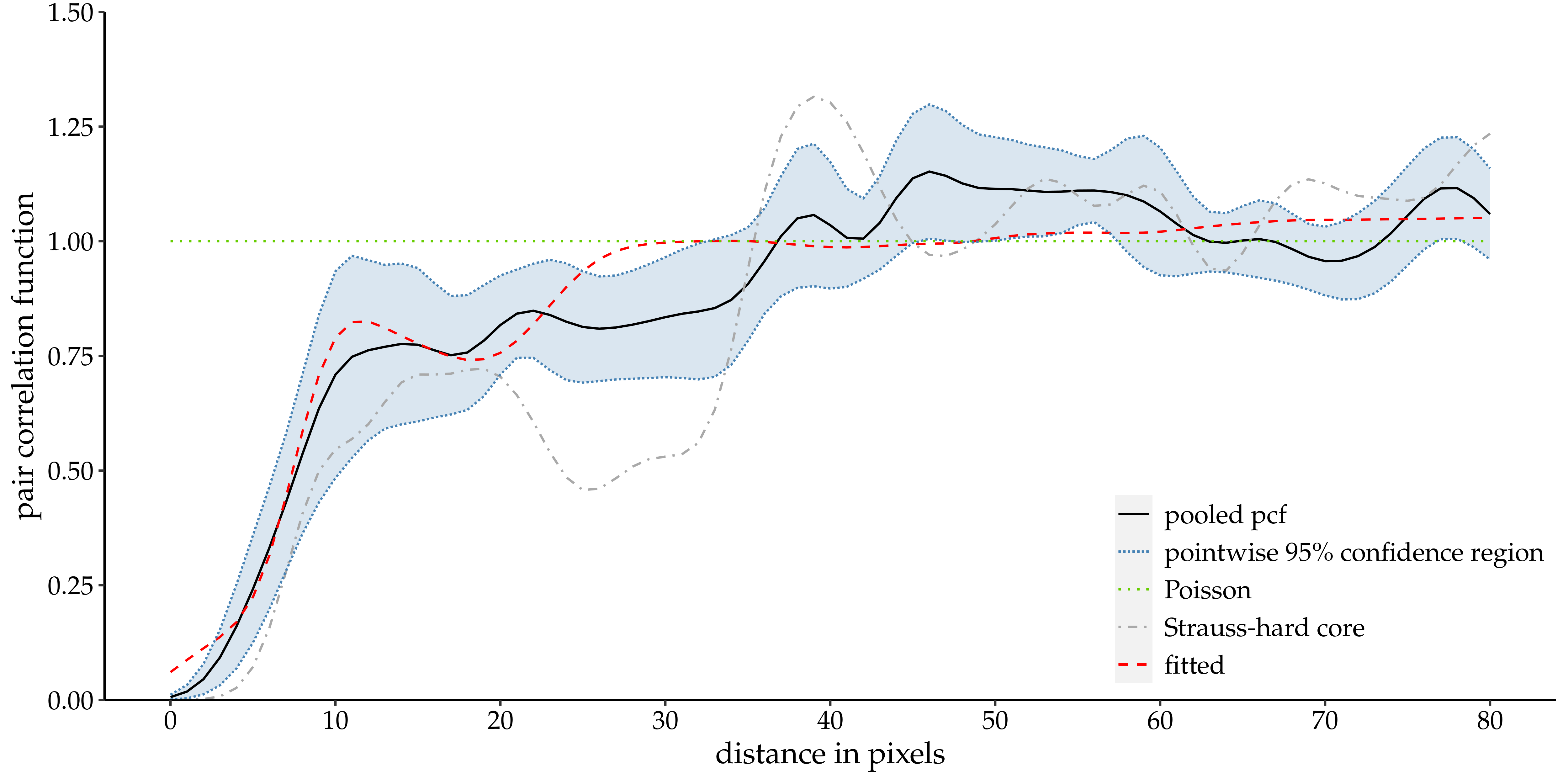}
		\caption{\it Pooled pair correlation function (PCF) based on the 20 
			high quality fingerprint images from FVC2002 DB1 with hand marked 
			minutiae (solid black). Dotted blue shaded: approximate pointwise 95\% 
			confidence intervals based on sample variances. Dotted green: 
			Theoretical PCF under the hypothesis of no interaction. Dash-dotted 
			grey: pooled PCF based on simulated Strauss processes with hard core. 
			Dashed red: pooled PCF for the 20 fitted models using the posterior 
			mean from Section~\ref{sec:test}.}
		\label{figure:PCF}
	\end{figure}
	
	Figure~\ref{figure:PCF} shows roughly two regimes of interaction. A regime of very strong inhibition in the range up to about 5--10 pixels and a regime of moderate inhibition up to 35--40 pixels. This suggests modelling the bulk of the minutiae by a two-scale Strauss process; see \cite[Example~6.2]{moller_statistical_2003}. We choose zero interaction at distances $\leq h$ (hard core, banning points closer than $h$) and interaction $\gamma \in (0,1)$ at distances $\in (h,R]$, and refer to the resulting point process as a \emph{Strauss process with hard core}. For comparison, the pooled PCF estimate is shown in Figure~\ref{figure:PCF} (dash-dotted grey curve) based on 20 simulated Strauss processes with hard core, having activity functions $\beta\,\mu_\ell(z)$, $1 \leq \ell \leq 20$, where $\mu_\ell$ is the necessary minutiae intensity~\eqref{eq:min.estimate} obtained from the $\ell$-th fingerprint image. The hard core distance $h = 8$ was chosen as the average inter-ridge distance (see Section~\ref{sec:mdf}). A pilot study on FVC2002 DB1 suggested that a Strauss \emph{interaction distance} $R$ of approximately three times the average inter-ridge distance seems to be a reasonable choice since only about 6\% of all minutiae pairs of the considered fingerprints have a smaller distance. The parameters $\beta = 1.9$ and $\gamma = 0.37$ are reasonable choices in view of the simulations considered later in Section~\ref{sec:test}. Note that the real minutiae patterns also contain the random minutiae, which essentially explains that the grey curve is visibly too small for values up to approximately $R$. To demonstrate that this deficiency will be suitably corrected once we apply our model in Section~\ref{sec:test} to the same data, we also show the PCF from that model fit (dashed red curve). Clearly our choice of $R$ is somewhat too small, as we will discuss in Section~\ref{sec:test}. 
	
	As a model for the random minutiae process $\Xi$, we choose a homogeneous Poisson process with unknown intensity $\lambda \ge 0$. Such a process has density
	\begin{align*}
	f_\lambda: \mathfrak{N} \to [0,\infty), \qquad f_\lambda(\xi) = \e^{(1-\lambda)|\iks|}\, \lambda^{n(\xi)}
	\end{align*}
	w.r.t.\ the standard Poisson process (homogeneous Poisson process with intensity $1$), see \cite[Proposition 3.8]{moller_statistical_2003}. Here $\mathfrak{N}$ denotes the set of all finite point configurations of $\iks$ and $n(\xi)$ is the number of points in~$\xi$.
	Ignoring that, in practice, random minutiae cannot be closer than the inter-ridge distance is harmless as their intensity is rather low, both in absolute terms and compared to the intensity of necessary minutiae (see Remark \ref{rem:lambda0}).
	
	If we also assumed that the necessary minutiae process $\Eta$ was a Poisson process, but inhomogeneous with intensity proportional to (\ref{eq:min.estimate}), then $\Xi \dcup \Eta$ would also be Poisson distributed, cf.~\cite[Proposition 3.6]{moller_statistical_2003}. Under this assumption the theoretical pair correlation function would be one which, with regard to Figure~\ref{figure:PCF}, contrasts reality.
	
	Last but not least, modelling the necessary minutiae as a point process with substantial inhibition of points is also advantageous from a conceptual point of view: up to certain errors arising from the discretization of the OF and RF into minutiae information as well as from data acquisition and processing, the necessary minutiae counts should be \emph{determined} by the underlying necessary minutiae intensity. Some inhibition between points is required to keep the variances of minutiae counts in regions with high necessary minutiae intensity small enough to be compatible with the data. Simulations we performed (not shown here) indicate that inhomogeneous Poisson processes based on the same intensity have much too high variances.
	
	The Strauss process with hard core has density
	\begin{align*}
	g_{\beta,\gamma}: \mathfrak{N}\to [0,\infty), \quad g_{\beta,\gamma}(\eta) = \alpha\, \left(\prod_{z \in \eta }  \beta(z)\right)\, \gamma^{s_R(\eta)}\, \ind\left( \dmin(\eta) > h \right)
	\end{align*}
	w.r.t.\ to the standard Poisson process. Here $h > 0$ is the hard core distance,
	\begin{align*}
	\dmin(\eta) = \min \{\|z-w\|: \{z,w\} \sub \eta\}
	\end{align*}
	is the minimum inter-point distance in $\eta$ (by convention $\{z,w\} \sub \eta$ shall always exclude the case $z=w$) and	
	\begin{align*}
	s_R(\eta) = \sum_{\{z,w\}\sub \eta} \ind\left(\|z-w\|\le R \right)
	\end{align*}
	is the number of pairs of points that lie within the interaction distance $R > h$ of each other. The \emph{activity} or \emph{trend} function $\beta: \iks \to [0,\infty)$ governs the intensity and takes up OF divergence and RF divergence and $\gamma \in (0,1)$ is the repulsion strength at distances $\in (h,R]$, meaning each point pair at distance in $(h, R]$ is penalized by a factor $\gamma<1$. Based on formula~(\ref{eq:min.estimate}), we assume 	
	\begin{equation}
	\label{eq:trendmod}
	\beta(z) = \beta\cdot\mu(z) \quad \mbox{ with } \quad \mu(z) =\left|\Phi(z)\div \vec F(z)+ \left\langle \nabla\Phi(z), \vec F(z) \right\rangle \right|
	\end{equation}
	for some factor $\beta \approx 1$. The factor $\alpha = \alpha(\beta,\gamma)$ denotes the normalising constant of the probability density $g_{\beta,\gamma}$ and is intractable, cf.~\cite[Section 6.2]{moller_statistical_2003}. We expect that $\beta$ is in fact larger than one because the presence of repulsion requires an activity larger than a Poisson intensity yielding comparable number of observed points, e.g.
	\cite{jensen2001review,eckel2009modelling}. Even in the homogeneous case it is not clear, how $\beta$ and $\gamma$ interact, \cite{baddeley2012fast,coeurjolly2018intensity} give some approximations.

	The interaction distances $h$ and $R$ are assumed to be known in advance, since joint estimation of $\gamma, h, R$ simultaneously is notoriously difficult due to strong negative correlation, cf.~\cite{redenbach_classification_2015}. We choose the hard core distance $h$ as the average inter-ridge distance of the finger -- which seems to be be fairly realistic -- and the interaction distance $R$ to be three times as large, see above.
	
	The observed minutiae pattern $\zeta=\{z_1,z_2,\dots,z_k\}$ can then be written as $\zeta=\xi \dcup \eta$ where $\xi$ and $\eta$ are realizations of $\Xi$ and $\Eta$, respectively. We introduce a latent variable $\mathbf{W}\in \{0,1\}^k$ where $\mathbf{W}_i = \ind\left( z_i \in \eta \right)$, so that $\mathbf{W}_i=1$ means that minutia $z_i$ is necessary. 
	We combine the parameters into a vector $\boldsymbol{\theta} = (\lambda, \beta, \gamma) \in \Theta := [0, \infty) \times [0, \infty)\times[0,1]$. Then, by independence of $\Xi$ and $\Eta$, the density of $\Xi \dcup \Eta$ given $\mathbf{W}$ is given by
	\begin{align}
	\label{eq:likelihood}
	f_\lambda(\xi)\, g_{\beta,\gamma}(\eta).
	\end{align}
	The question arises how to find suitable values of $\boldsymbol{\theta}$ and an assignment $\mathbf{W}$ of the minutiae to $\xi$ and $\eta$. The computation of a maximum likelihood estimator for the parameters is notoriously difficult due to the intractable normalising constant. A maximum pseudo-likelihood approach for the superimposed processes is studied in \cite{wieditz_separating_2021}. While it is computationally expensive, it does not provide information about $\mathbf{W}$. Even more importantly, it is questionable whether there is only one single choice of $(\boldsymbol{\theta}, \mathbf{W})$ that fits best. Our view is that there likely are several choices which all fit reasonably well, particularly for $\mathbf{W}$. Bearing this in mind, we adopt a Bayesian approach, exploring the posterior distribution of $(\boldsymbol{\theta}, \mathbf{W})$ given the minutiae point pattern~$\xi \dcup \eta$. Not only does this yield information about the parameter values, it also provides a quantification of the uncertainty of the assignment to the classes of necessary and random minutiae. 	
	
	\section{Bayesian Inference using MCMC}	\label{sec:mcmc}
	For better readability we do not distinguish between random variables and their realizations when writing ``$\boldsymbol{\theta}$'' or ``$\mathbf{W}$''. Given a minutiae pattern $\zeta = \left\{ z_1,z_2,\dots, z_k \right\} \sub \iks$, Bayes' theorem yields for the posterior distribution 
	\begin{align}
	\label{eq:Bayes}
	\pi(\boldsymbol{\theta}, \mathbf{W} \mid \zeta) = \frac{\pi(\zeta \mid \boldsymbol{\theta}, \mathbf{W}) \, \pi(\boldsymbol{\theta}, \mathbf{W})}{\int \pi(\zeta \mid \tilde{\boldsymbol{\theta}}, \tilde{\mathbf{W}}) \, \pi(\tilde{\boldsymbol{\theta}}, \tilde{\mathbf{W}}) \d (\tilde{\boldsymbol{\theta}}, \tilde{\mathbf{W}}) },
	\end{align}
	where $\pi(\zeta \mid \boldsymbol{\theta}, \mathbf{W})$ is the \textit{likelihood} of our data given parameters and labels as described in (\ref{eq:likelihood}) and $\pi(\boldsymbol{\theta}, \mathbf{W})$ models our \textit{prior} belief about the parameters and labels. We face a doubly intractable problem because the denominator in (\ref{eq:Bayes}) is intractable, and $\pi(\zeta\mid \boldsymbol{\theta}, \mathbf{W})$ in the numerator contains another intractable normalising constant. Ignoring the second factor for the moment, we can eliminate the first intractability by applying a Metropolis--Hastings algorithm,	see Algorithm~\ref{algo:MCMC}, which produces samples from the posterior distribution~\eqref{eq:Bayes}. For an overview about MCMC methods see e.g.\ \cite{brooks_handbook_2011} or \cite{liu_monte_2004}.
	
	
	\begin{algo}[Framework for the Minutiae Separation Algorithm (MiSeal)] 
		$ $\\[.5mm]
		\label{algo:MCMC}
		\begin{algorithm*}[H]
			\DontPrintSemicolon
			\SetKwInOut{Input}{Input}\SetKwInOut{Output}{Output}
			\SetNlSkip{0em}
			\SetInd{.5em}{.25em}
			\BlankLine
			\Indentp{-.5em}
			\Input{\hspace{1ex}Minutiae pattern $\zeta = \left\{ z_1,z_2,\dots, z_k \right\} \sub \iks$. }
			\Indentp{1em}
			\BlankLine
			Choose some initial $(\boldsymbol{\theta}^{(0)}, \mathbf{W}^{(0)}) \in \Theta \times \{0,1\}^k$. \;
			\For{ $t=0,1,2,\dots$ }{
				
				Given $(\boldsymbol{\theta}^{(t)}, \mathbf{W}^{(t)}) = (\boldsymbol{\theta}, \mathbf{W})$, generate a candidate $(\boldsymbol{\theta}', \mathbf{W}')$ for the next sample from the probability density $q(\boldsymbol{\theta}', \mathbf{W}' \mid \boldsymbol{\theta}, \mathbf{W})$.\;
				Calculate the \textit{Hastings ratio} 
				\begin{align}
				\label{eq:Hratio}
				H(\boldsymbol{\theta}', \mathbf{W}' \mid \boldsymbol{\theta}, \mathbf{W}) &= \frac{\pi(\boldsymbol{\theta}', \mathbf{W}' \mid \zeta)}{\pi(\boldsymbol{\theta}, \mathbf{W} \mid \zeta)} \, \frac{q(\boldsymbol{\theta}, \mathbf{W} \mid \boldsymbol{\theta}', \mathbf{W}')}{q(\boldsymbol{\theta}', \mathbf{W}' \mid \boldsymbol{\theta}, \mathbf{W})}. 
				\end{align}\;
				\vspace{-4ex}
				Accept the candidate with probability $\min\{H(\boldsymbol{\theta}', \mathbf{W}'\mid \boldsymbol{\theta}, \mathbf{W}), 1\}$. \;
				In case of acceptance set $(\boldsymbol{\theta}^{(t+1)}, \mathbf{W}^{(t+1)}) = (\boldsymbol{\theta}', \mathbf{W}')$, otherwise $(\boldsymbol{\theta}^{(t+1)}, \mathbf{W}^{(t+1)}) = (\boldsymbol{\theta}, \mathbf{W})$.
			}
			If deemed necessary, discard the first $t_0$ samples (burn-in).
			\BlankLine
			\Indentp{-1em}
			\Output{\hspace{1ex}A sample $(\boldsymbol{\theta}^{(t)}, \mathbf{W}^{(t)})_{t=t_0+1,t_0+2,\dots}$ from the distribution induced by $\pi(\boldsymbol{\theta}, \mathbf{W}\mid \zeta)$.}
			\BlankLine
			\BlankLine
		\end{algorithm*}
	\end{algo}

	\begin{remark}
		The performance of Algorith~\ref{algo:MCMC} crucially depends on the computation of the Hastings ratio (\ref{eq:Hratio}). Applying (\ref{eq:Bayes}), this can be written as	
		\begin{align}
		H(\boldsymbol{\theta}', \mathbf{W}' \mid \boldsymbol{\theta}, \mathbf{W}) &= \frac{\pi(\zeta\mid \boldsymbol{\theta}', \mathbf{W}')}{\pi(\zeta \mid \boldsymbol{\theta}, \mathbf{W})} \frac{\pi(\boldsymbol{\theta}', \mathbf{W}')}{\pi(\boldsymbol{\theta}, \mathbf{W})} \frac{q(\boldsymbol{\theta}, \mathbf{W}\mid \boldsymbol{\theta}',\mathbf{W}')}{q(\boldsymbol{\theta}',\mathbf{W}' \mid \boldsymbol{\theta}, \mathbf{W})}.
		\end{align}
		In the following sections we elaborate further details on how to compute the individual quotients, the choice of our priors as well as the update procedures for $\boldsymbol{\theta}$ and $\mathbf{W}$. 
		
		For the updates, we employ a random scan Gibbs sampler \cite{liu_monte_2004} with update probabilities $p_{\boldsymbol{\theta}}$ for $\boldsymbol{\theta}$ and $1-p_{\boldsymbol{\theta}}$ for $\mathbf{W}$ because the computation of the Hastings ratio for a joint update turned out to be difficult. A value of $p_{\boldsymbol{\theta}} = 0.05$ (reflecting on average 19 proposed flips out of the approx.\ 30--60 minutiae per finger for each $\boldsymbol{\theta}$ update) approximately yields the fastest mixing. The update for each component employs a Markov chain yielding a variant from the \textit{Metropolis-within-Gibbs} class of algorithms, see \cite{roberts2006harris}. 
	\end{remark}
	
	\subsection{Choice of priors}
	\label{sec:priors}
	
	We write $\pi(\boldsymbol{\theta}, \mathbf{W}) = \pi(\boldsymbol{\theta})\pi(\mathbf{W})$ and assume the prior of the parameters $\pi(\boldsymbol{\theta}) = \pi(\lambda)\pi(\beta)\pi(\gamma)$ is a product of the priors of the single parameters chosen as follows.
	
	We choose $\beta \sim \Gamma(a_1, b_1)$ where the parameters $a_1=b_1=5$ are chosen such that the expected value of $\beta$ equals one (see around \eqref{eq:trendmod}) and the variance is reasonably large. For flexibility we choose $\gamma \sim \Beta(p_1,q_1)$ to be beta distributed with $p_1=2$, $q_1=5$ (hence, $\E \gamma = \frac27$), since we expect the process to be rather inhibitive also outside the hard core distance. 
	
	For the intensity $\lambda$ of the random minutiae we choose the prior to be a Gamma distribution $\Gamma(a_0,b_0)$. Being conjugate to the Poisson likelihood, this prior has the advantage that we do not have to perform Hastings steps when updating $\lambda$, but can draw directly from the posterior distribution conditional on $\mathbf{W},\beta$ and $\gamma$, cf.~(\ref{eq:18}) below. We choose the parameters $a_0=5$, $b_0 = \frac{5}{\lambda_0}$ such that the  expected value of the prior is $\lambda_0=10^{-4}$ from (\ref{eq:lambda0}) and its variance is $\lambda^2_0/5$ such that the ratio $1/\sqrt{5}$ of standard deviation over mean is reasonably sized.
	
	Furthermore, for every imprint featuring $k$ minutiae in its region of interest $\iks$, Remark \ref{rem:lambda0} yields an expected number $\lambda_0|\iks|$ of random minutiae in $\iks$. Hence, for each finger individually, we choose the prior for the label vector as $\pi(\mathbf{W}) = \bigotimes_{i=1}^k \pi(\mathbf{W}_i)$ with $\mathbf{W}_i \iid \mathrm{Ber}\left( p_\mathbf{W} \right)$ and $p_\mathbf{W} = \max\left\{ 1-\frac{\lambda_0 |\iks|}{k}, 0\right\}$. We discard infeasible label vectors, i.e.\ label vectors for which some pairs of $\{z_i \mid \mathbf{W}_i = 1\}$  have distance smaller than $h$ and thereby violate the hard core condition. The ratio of priors for different parameters $(\boldsymbol{\theta}, \mathbf{W})$, $(\boldsymbol{\theta}', \mathbf{W}')$ thus computes as	
	\begin{align}
		\notag &\frac{\Pi(\theta', W')}{\Pi(\theta, W)} = \frac{\Pi(\theta')}{\Pi(\theta)} \frac{\Pi(W')}{\Pi(W)}\\
		\label{eq:prior}&= \left( \frac{\lambda'}{\lambda} \right)^{a_0-1} \e^{-b_0(\lambda'-\lambda)} \, \left( \frac{\beta'}{\beta} \right)^{a_1 - 1} \e^{-b_1(\beta'-\beta)}\, \left(\frac{\gamma'}{\gamma}\right)^{p_1-1} \left(\frac{1-\gamma'}{1-\gamma}\right)^{q_1 - 1}\, \left( \frac{p_W}{1-p_W} \right)^{\ell'-\ell},
		\end{align}
	where $1\le \ell, \ell'\le k$ denote the number of ones in $\mathbf{W}, \mathbf{W}'$, respectively.
	
	We have adjusted the variances of the priors such that they concentrate on a domain we deem reasonable according to a pilot study (not shown here). We keep them rather uninformative, however, to avoid undesirable dependence of the posterior on our particular prior choices.
	
	\subsection{Update of $\boldsymbol{\theta}$}
	\label{sec:update.theta}
	
	When updating $\boldsymbol{\theta}$, we randomly choose to update either $\lambda$ or $(\beta, \gamma)$. To this end, toss a coin with success probability $p_\lambda = 0.2$  (this is slightly smaller than $1/3$, taking into account that, due to the explicitly available posterior, see below, there are no rejects for the $\lambda$-updates). In case of success, update $\lambda$, which, since the Gamma prior is conjugate for the Poisson likelihood, we can draw directly from the posterior distribution, namely from	
	\begin{align}
	\label{eq:18}
	\lambda \mid (\zeta, \mathbf{W}) \sim \Gamma( a_0 + n_0, b_0 + |\iks| ),
	\end{align}
	where $n_0$ is the number of minutiae currently labelled as random in $\mathbf{W}$. Note that this does not depend on any of the other parameters of $\boldsymbol{\theta}$ nor on the current state of~$\lambda$.
	
	In case of failure, we update the parameters of the Strauss process by proposing a normally distributed step in the natural parameter space ($\log$-space), i.e.\ our proposal $(\beta',\gamma')$ is log-normally distributed,	
	\begin{align}
	\label{eq:19}
	\begin{pmatrix}
	\beta'\\ \gamma'
	\end{pmatrix} \sim \mathcal{LN} \left( \begin{pmatrix}
	\log \beta\\ \log \gamma 
	\end{pmatrix}, \begin{pmatrix}
	\sigma_1^2 & \rho_{12} \sigma_1\sigma_2 \\
	\rho_{12} \sigma_1\sigma_2 & \sigma_2^2
	\end{pmatrix} \right), 
	\end{align}
	where the parameters $\sigma_1$, $\sigma_2$, $\rho_{12}$ are fixed (see end of this subsection). Denote the covariance matrix in (\ref{eq:19}) by ${\Sigma}$. Then, the proposal density for a $(\beta,\gamma)$-update from $\boldsymbol{\theta} = (\lambda, \beta, \gamma)$ to $\boldsymbol{\theta}' = (\lambda, \beta', \gamma')$ is given as	
	\begin{align*}
	q\left(\boldsymbol{\theta}',\mathbf{W} \mid \boldsymbol{\theta},\mathbf{W} \right)
	&= \frac{1}{2\pi \sqrt{\det {\Sigma}}}\, \frac{1}{\beta'\gamma'} \exp \left( -\frac12  \begin{pmatrix}
	\log \beta'/\beta\\ \log \gamma'/\gamma
	\end{pmatrix} 
	^\top {\Sigma}^{-1} 
	\begin{pmatrix}
	\log \beta'/\beta\\ \log \gamma'/\gamma
	\end{pmatrix} 
	\right)
	\end{align*}
	and hence	
	\begin{align}
	\label{eq:proposal_theta}
	\frac{q(\boldsymbol{\theta},\mathbf{W} \mid \boldsymbol{\theta}',\mathbf{W})}{q(\boldsymbol{\theta}',\mathbf{W}\mid \boldsymbol{\theta},\mathbf{W})} &= \frac{\beta' \gamma'}{\beta \gamma}.
	\end{align}	
	To compute the Hastings ratio for the $\boldsymbol{\theta}$-update, we consider the likelihood ratio	
	\begin{align} 
	\frac{\pi(\zeta\mid \boldsymbol{\theta}', \mathbf{W})}{\pi(\zeta\mid \boldsymbol{\theta}, \mathbf{W})} &= \frac{\alpha(\beta',\gamma')}{\alpha(\beta,\gamma)} \, \left( \frac{\beta'}{\beta} \right)^{n(\eta)} \, \left( \frac{\gamma'}{\gamma} \right)^{s_R(\eta)}.
	\end{align}
	Note that this ratio still contains a ratio of normalising constants which cannot be computed explicitly. To overcome this problem, we apply the auxiliary variable method which goes back to \cite{besag_spatial_1993}. For application in point processes we refer to~\cite{berthelsen_bayesian_2006,murray_mcmc_2012,redenbach_classification_2015,rajala_variational_2016} and the references therein. 
	
	To this end, we extend the state space and introduce an auxiliary point pattern $\tilde \chi$ with density $\phi(\tilde \chi \mid \mathbf{W}, \zeta)$ w.r.t.\ the standard Poisson process which does not depend on the current $\boldsymbol{\theta}$. The point pattern $\tilde \chi$ is then included in the model as an additional variable. For this, we have to define a new proposal distribution on the extended state space, which we choose as	
	\begin{align*}
	\tilde q(\boldsymbol{\theta}', \mathbf{W}, \tilde \chi' \mid \boldsymbol{\theta}, \mathbf{W}, \tilde \chi) = \tilde q(\boldsymbol{\theta}', \mathbf{W}, \tilde \chi' \mid \boldsymbol{\theta}, \mathbf{W}) = g_{\beta',\gamma'}(\tilde \chi') \, q(\boldsymbol{\theta}', \mathbf{W} \mid \boldsymbol{\theta}, \mathbf{W}),
	\end{align*}
	i.e.\ we draw the new auxiliary point pattern $\tilde \chi'$ independently of the current auxiliary point pattern $\tilde \chi$ as a realization of a Strauss process with hard core having parameter $\boldsymbol{\theta}'$, whereas the proposal for the parameter $\boldsymbol{\theta}'$ remains as before. Then, the Hastings ratio for a parameter update from $\boldsymbol{\theta} = (\lambda, \beta, \gamma)$ to $\boldsymbol{\theta}' = (\lambda, \beta', \gamma')$ is 	
	\begin{align}
	\notag H(\boldsymbol{\theta}', \mathbf{W}, \tilde \chi' \mid \boldsymbol{\theta}, \mathbf{W}, \tilde \chi) &= \frac{\pi(\boldsymbol{\theta}', \mathbf{W}, \tilde \chi' \mid \zeta)}{\pi(\boldsymbol{\theta}, \mathbf{W}, \tilde \chi \mid \zeta)}\, \frac{\tilde q(\boldsymbol{\theta}, \mathbf{W}, \tilde \chi \mid \boldsymbol{\theta}', \mathbf{W}, \tilde \chi')}{\tilde q(\boldsymbol{\theta}', \mathbf{W}, \tilde \chi' \mid \boldsymbol{\theta}, \mathbf{W}, \tilde \chi)}\\[1mm]
	\label{eq:20}&= \frac{\phi(\tilde \chi' \mid \mathbf{W}, \zeta)}{\phi(\tilde \chi \mid \mathbf{W}, \zeta)} \, \frac{g_{\beta',\gamma'}(\eta)}{g_{\beta,\gamma}(\eta)} \, \frac{\pi(\boldsymbol{\theta}', \mathbf{W})}{\pi(\boldsymbol{\theta}, \mathbf{W})}\, \frac{g_{\beta,\gamma}(\tilde \chi)}{g_{\beta',\gamma'}(\tilde \chi')} \frac{q(\boldsymbol{\theta}, \mathbf{W} \mid \boldsymbol{\theta}', \mathbf{W})}{q(\boldsymbol{\theta}', \mathbf{W} \mid \boldsymbol{\theta}, \mathbf{W})}
	\end{align}
	where the ratio of priors and proposals can be obtained from (\ref{eq:prior}) and (\ref{eq:proposal_theta}), respectively. Since the normalising constants of the two $g_{\beta,\gamma}$-terms and the two $g_{\beta', \gamma'}$-terms cancel, this Hastings ratio can be computed explicitly. However, in every update we have to draw a new point pattern from a Strauss process with hard core, for which we again have to run a Markov chain. This requires a considerable additional effort in each update step for $(\beta, \gamma)$. 
	
	A crucial influence on the algorithm's performance is the choice of $\phi$ which ideally should fit well to the proposal density $g_{\beta,\gamma}$. The best choice would of course be to choose $\phi(\tilde \chi \mid \mathbf{W}, \zeta) = g_{\beta,\gamma}(\tilde \chi)$, which is not feasible since then~(\ref{eq:20}) would contain the normalising constants again. In 	\cite{redenbach_classification_2015}, the density of a Poisson process was used, which results in rather poor mixing behaviour, cf.~\cite[Figure 6]{redenbach_classification_2015}. We therefore choose for $\phi$ the density of another Strauss process with hard core, fixing its parameter at 	$\hat{\boldsymbol{\theta}} = \hat{\boldsymbol{\theta}}_{\text{MPLE}}(\zeta,\mathbf{W})$, where $\mathbf{W}$ is the 
	current label vector and $\hat{\boldsymbol{\theta}}_{\text{MPLE}}(\zeta,\mathbf{W})$ is the maximum pseudo-likelihood estimate (MPLE) based on the minutiae currently labelled as necessary; see e.g.~\cite[Section~13.13]{baddeley_spatial_2015} and the references given there. If we knew the true $\mathbf{W}$ in advance, this would be a good initial guess for the parameters. However, in practice the true $\mathbf{W}$ is not known. We therefore adapt $\hat{\boldsymbol{\theta}}$ iteratively during burn-in and keep it fixed for the rest of the run, so that we still obtain convergence to the desired posterior distribution.
	
	For the proposed MCMC algorithm the proposal variances $\sigma_1$, $\sigma_2$ and correlation $\rho_{12}$ have to be determined. To this end, we considered for reasonable values of $\sigma_1, \sigma_2$ and $\rho_{12}=0$ a pilot sample and estimate the correlation coefficient $\rho_{12}$ as its sample correlation. The corresponding proposal variances are then adjusted such that the acceptance rate for a parameter proposal is about $23.4\%$ which is optimal according to~\cite{roberts_weak_1997}. For our computations we set $\sigma_1 = 0.07$, $\sigma_2 = 0.05$ and $\rho_{12} = -0.7$.
	
	\subsection{Update of $\mathbf{W}$}
	\label{sec:update.W}
	
	\noindent
	When updating $\mathbf{W}$, we pick one component of $\mathbf{W}$ uniformly at random, $\mathbf{W}_i$, say, and propose to flip it to $\mathbf{W}_i' = 1-\mathbf{W}_i$, while keeping the other components unchanged, $\mathbf{W}_j' = \mathbf{W}_j$ for $j \neq i$. Writing $\zeta = \xi \dcup \eta = \xi' \dcup \eta'$ for the partition in random and necessary minutiae before and after the proposed flip, respectively, we either have $\xi' = \xi \cup \{z_i\}$, $\eta' = \eta \setminus \{z_i\}$ if the flip of $\mathbf{W}_i$ is from $1$ to $0$ or $\xi' = \xi \setminus \{z_i\}$, $\eta' = \eta \cup \{z_i\}$ if the flip is from $0$ to $1$. Thus, the Hastings ratio is given as	
	\begin{align}
	\notag \hspace*{-2mm} &H(\boldsymbol{\theta}, \mathbf{W}', \tilde \chi \mid \boldsymbol{\theta}, \mathbf{W}, \tilde \chi)\\ 
	\notag &= \frac{\pi(\boldsymbol{\theta}, \mathbf{W}', \tilde \chi \mid \zeta)}{\pi(\boldsymbol{\theta}, \mathbf{W}, \tilde \chi \mid \zeta)}\, \frac{\tilde q(\boldsymbol{\theta}, \mathbf{W}, \tilde \chi \mid \boldsymbol{\theta}, \mathbf{W}', \tilde \chi)}{\tilde q(\boldsymbol{\theta}, \mathbf{W}', \tilde \chi \mid \boldsymbol{\theta}, \mathbf{W}, \tilde \chi)} \\[1mm]
	&= \label{eq:HratioW} \frac{\phi(\tilde \chi \mid \mathbf{W}', \zeta)}{\phi(\tilde \chi \mid \mathbf{W}, \zeta)} \, \frac{f_\lambda(\xi')g_{\beta,\gamma}(\eta')}{f_\lambda(\xi) g_{\beta,\gamma}(\eta)} \, \frac{\pi(\boldsymbol{\theta}, \mathbf{W}')}{\pi(\boldsymbol{\theta}, \mathbf{W})}\, \frac{q(\boldsymbol{\theta}, \mathbf{W} \mid \boldsymbol{\theta}, \mathbf{W}')}{q(\boldsymbol{\theta}, \mathbf{W}' \mid \boldsymbol{\theta}, \mathbf{W})} \\[1mm]
	\notag &= \begin{cases}
	\displaystyle \frac{\lambda}{\beta(z_i) \gamma^{t_R(z_i, \eta\setminus\{z_i\})}}\, \frac{1-p_\mathbf{W}}{p_\mathbf{W}} &\text{ if $\mathbf{W}_i=1$ and $\mathbf{W}'_i=0$},\\[2ex]
	\displaystyle \frac{\beta (z_i) \gamma^{t_R(z_i, \eta)}}{\lambda} \, \ind\left( \dmin(\eta \cup \{z_i\}) > h \right) \, \frac{p_\mathbf{W}}{1-p_\mathbf{W}} &\text{ if $\mathbf{W}_i=0$ and $\mathbf{W}'_i=1$}.
	\end{cases}
	\end{align}
	Here, $t_R(z_i, \eta)$ denotes the number of $R$-close neighbours of $z_i$ in $\eta$. Note that in the last equality we used that the first factor of \eqref{eq:HratioW} is equal to~1, which is valid \emph{after} the burn-in phase, when we do not update the parameters of the auxiliary target density $\varphi$ anymore. By contrast, it seems that the same factor was erroneously omitted in \cite[Section~3.2.2]{redenbach_classification_2015}. It is not equal to $1$ there, because the auxiliary target density is homogeneous Poisson with intensity depending on $\eta$ (in our notation). This may be another part of the reason for the unfavourable mixing behaviour in \cite{redenbach_classification_2015}. 
	
	\section{Performance of the Minutiae Separating Algorithm}
	\label{sec:test}
	
	In order to benchmark MiSeal for separating random from necessary minutiae, we first simulate a test scenario close to real fingerprints with true parameters known. To this end, for each of the manually marked 20 fingerprints, we compute a smoothed necessary minutiae intensity function building on $\mu$ from (\ref{eq:min.estimate}), draw  ``true'' parameters $(\lambda, \beta, \gamma)$ from the priors specified in Section~\ref{sec:priors} and simulate a sample of superimposed random and necessary minutiae following the model (\ref{eq:likelihood}). Such a simulated minutiae pattern is depicted in Figure~\ref{figure:04b}; the original minutiae pattern of the same print is seen in Figure~\ref{figure:04a} and the heat map of the necessary minutiae intensity on which the simulation is based in Figure~\ref{figure:04c}.
	
	\begin{figure}[p]
		\centering
		\hspace{-13mm}
		\begin{subfigure}[t]{.30\textwidth}
			\centering
			\includegraphics[height=.4\textheight]{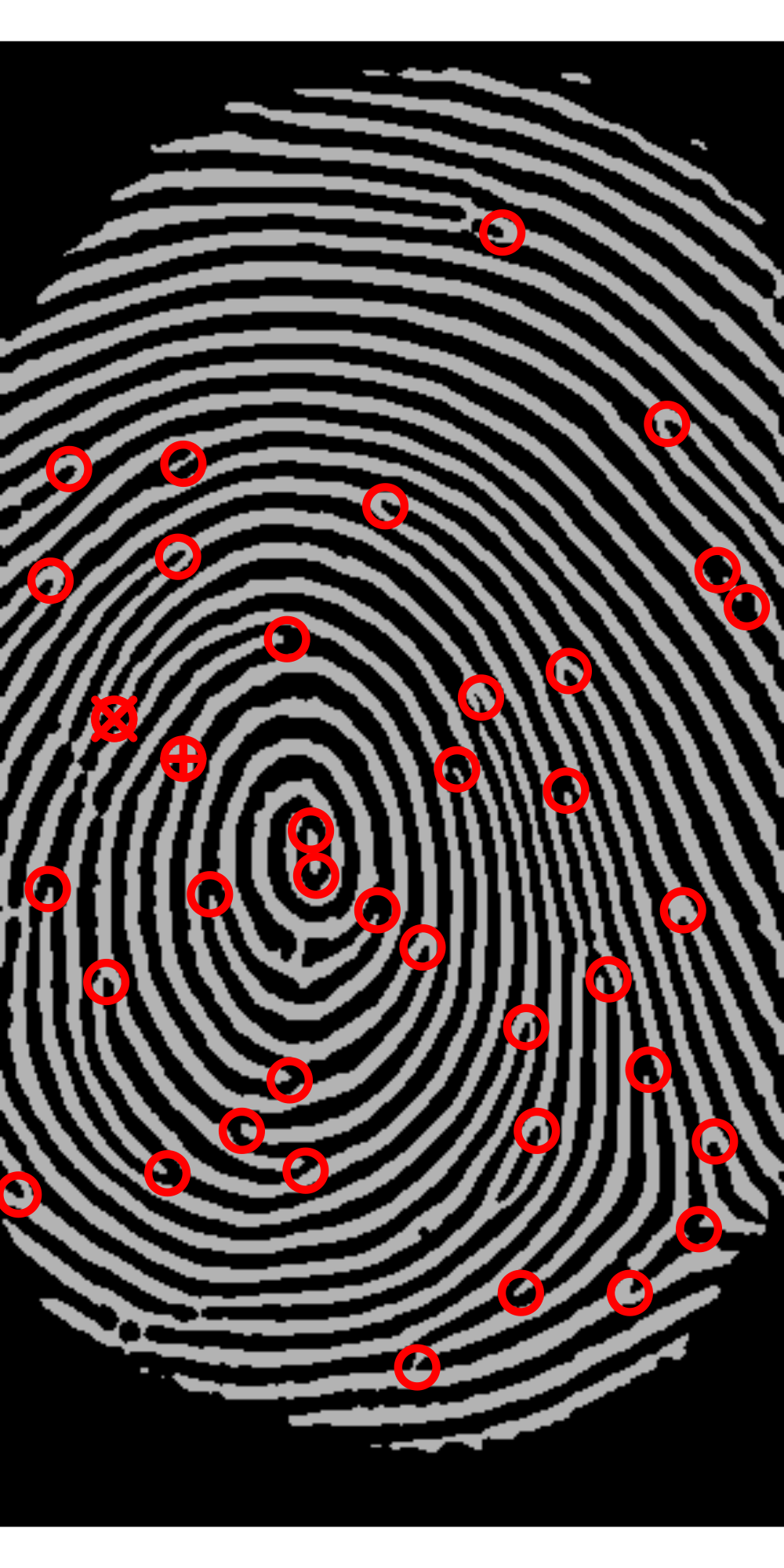}
			\caption{}
			\label{figure:04a}
		\end{subfigure}%
		%
		\begin{subfigure}[t]{.30\textwidth}
			\centering
			\includegraphics[height=.4\textheight]{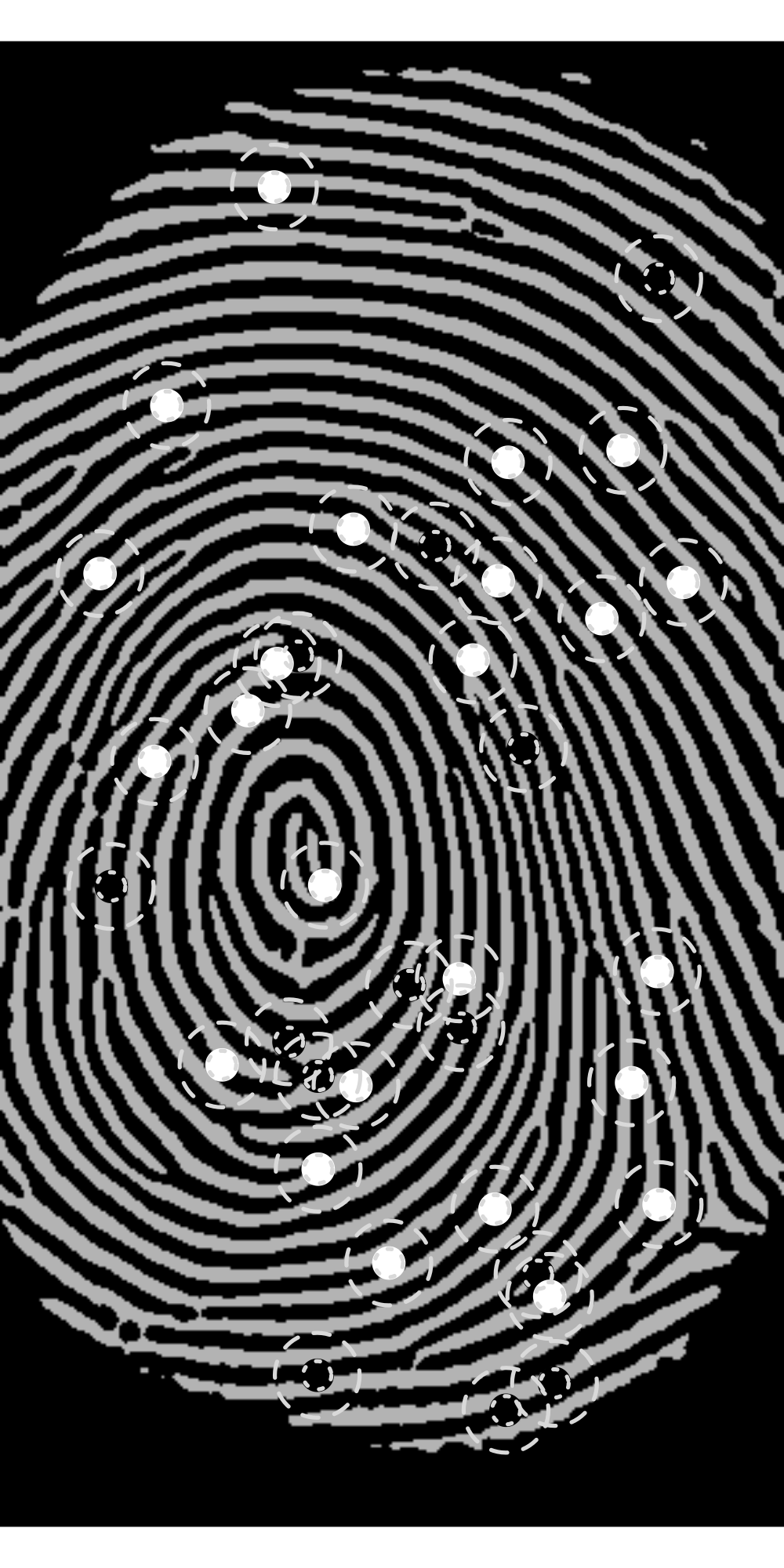}
			\caption{}
			\label{figure:04b}
		\end{subfigure}%
		%
		\hspace{-3.5mm}
		\begin{subfigure}[t]{.35\textwidth}
			\centering	
			\includegraphics[height=.4\textheight]{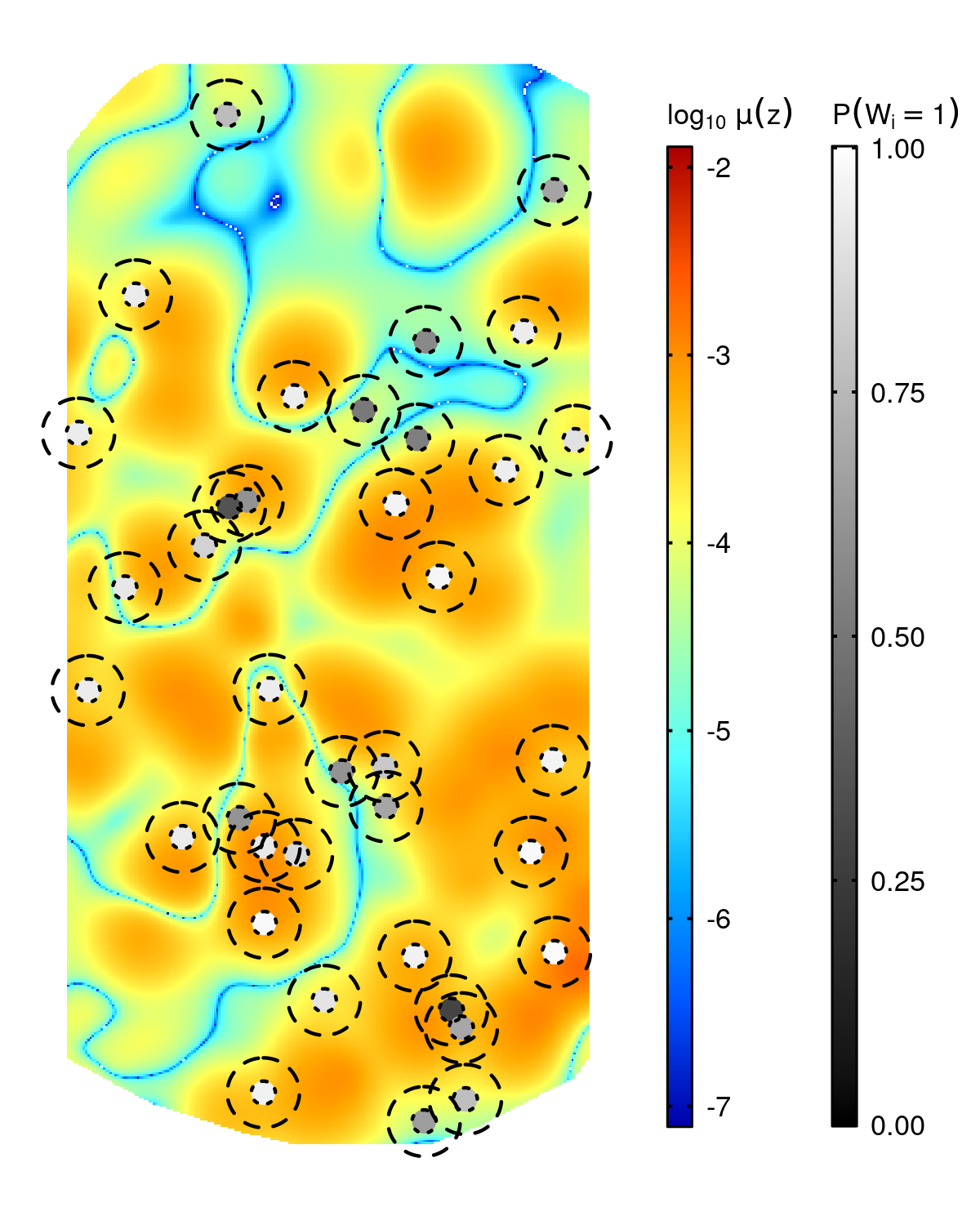}
			\caption{}
			\label{figure:04c}
		\end{subfigure}
		~\\[1ex]
		\begin{subfigure}[t]{.49\textwidth}
			\centering
			\includegraphics[width=\textwidth]{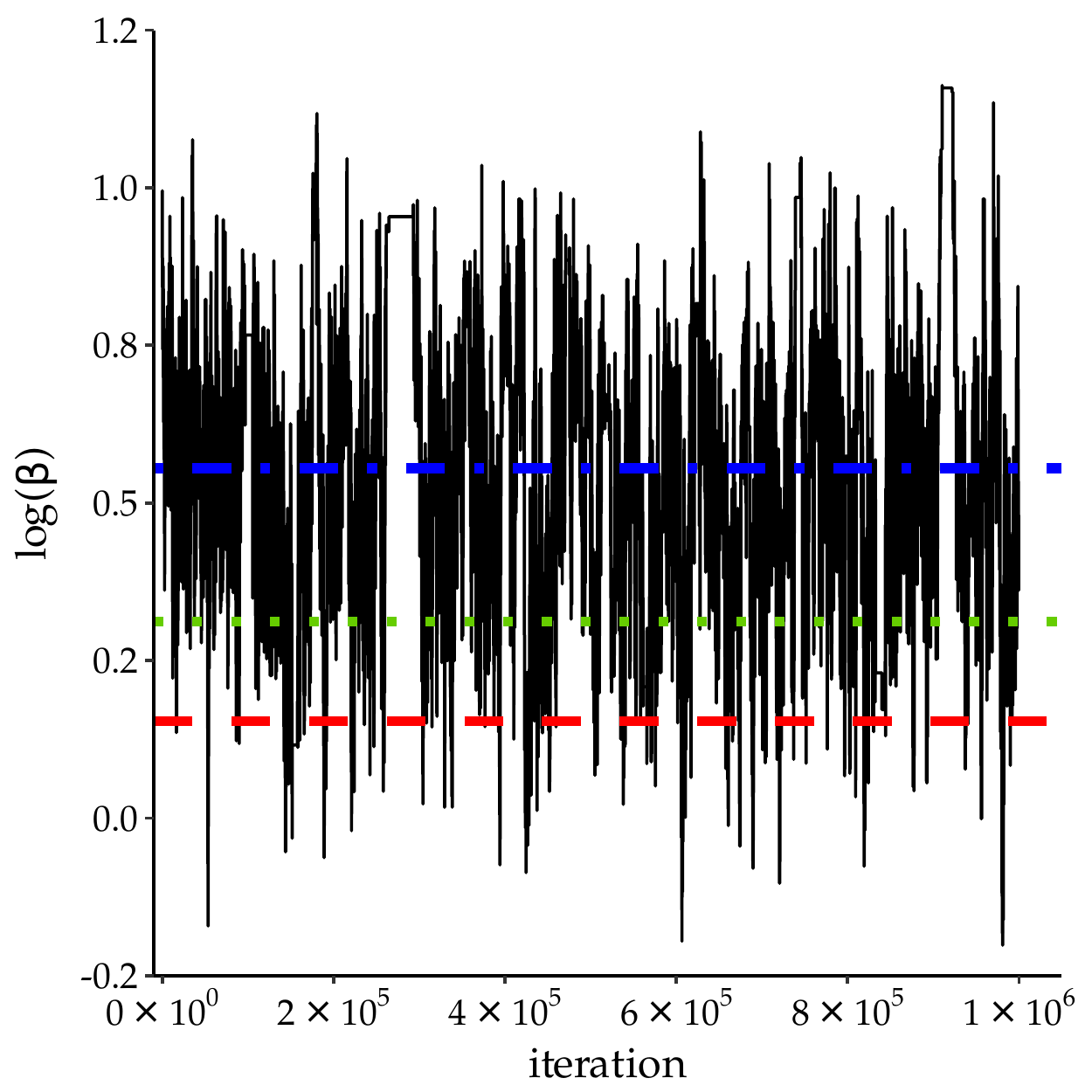}
			\caption{}
			\label{figure:04d}
		\end{subfigure}%
		\hfill
		\begin{subfigure}[t]{.49\textwidth}
			\centering
			\includegraphics[width=\textwidth]{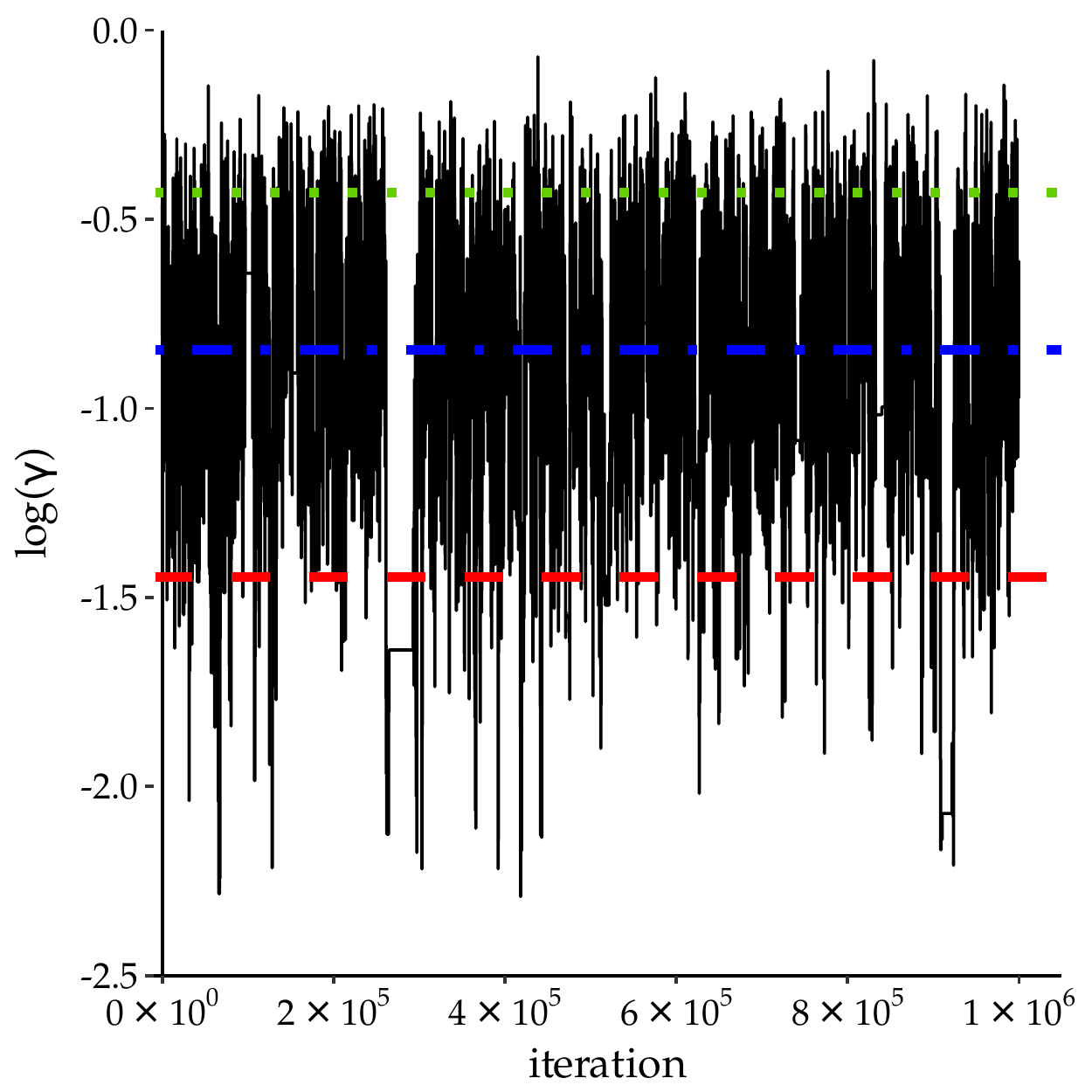}
			\caption{}
			\label{figure:04e}
		\end{subfigure}%
		\caption{\it (a) Minutiae pattern of the fingerprint from Figure \ref{figure:minutia-types}. (b) Simulated random minutiae (black) and necessary minutiae (white) using the necessary minutiae intensity computed by \eqref{eq:min.estimate} from the fingerprint image; the outer circles (dashed) have radius $R/2$. (c) Simulated minutiae with posterior probabilities in grey values from random (black) to necessary (white). The heat map gives the computed necessary minutiae intensity. (d),(e) Trace plots of the parameters $\beta$ and $\gamma$ of the Strauss process on the log-scale with true value (dashed red), MPLE (dotted green) and posterior mean (dash-dotted blue).}
		\label{figure:04}
	\end{figure}
	
	\begin{figure}[t]
		\centering
		\includegraphics[width=\textwidth]{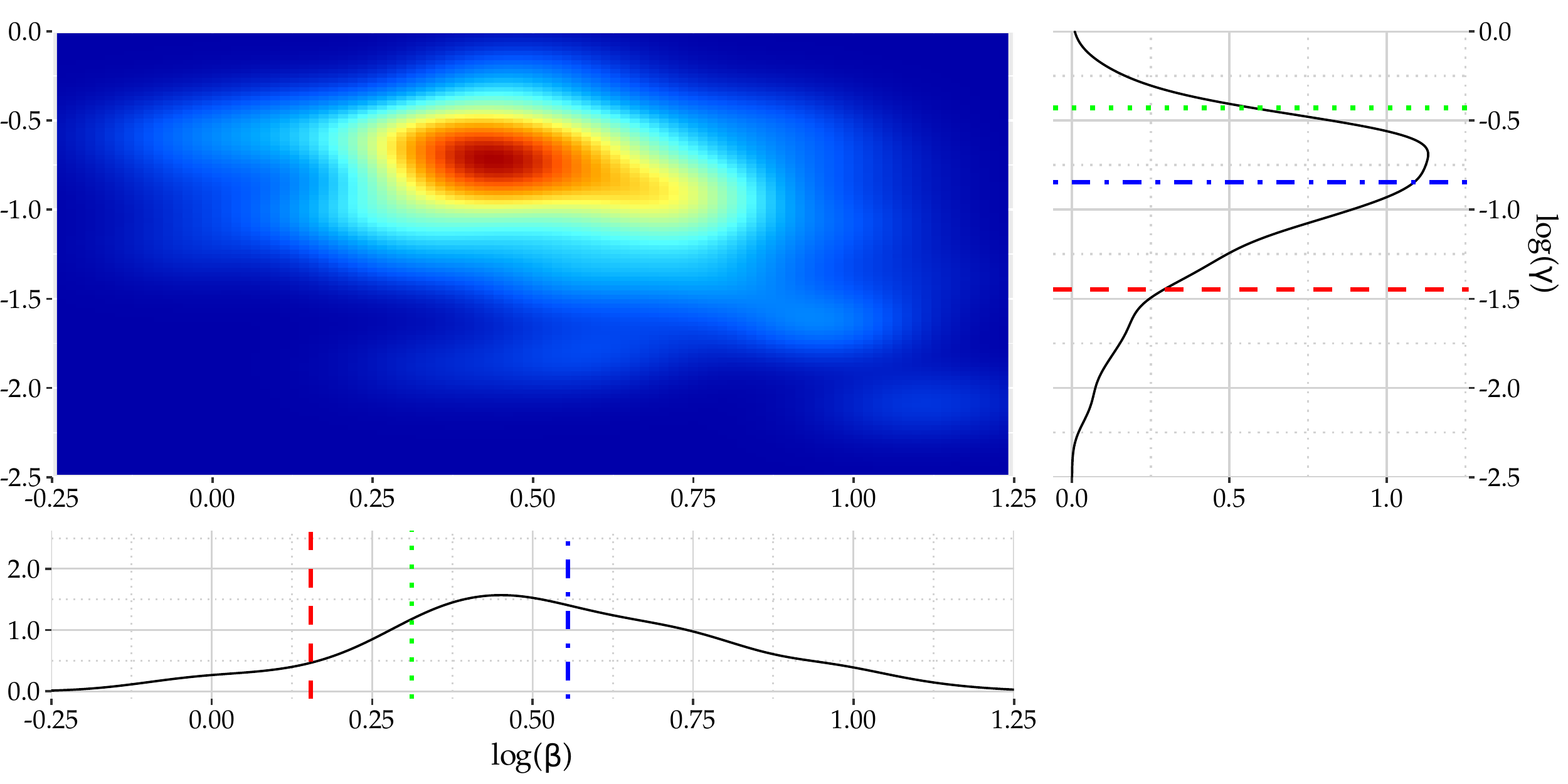}
		\caption{\it Posterior distribution of $(\beta, \gamma)$ and its marginals on the log-scale with true value (dashed red), MPLE (dotted green) and posterior mean (dash-dotted blue).}
		\label{figure:05}
	\end{figure}
	
	We determine the parameter $\hat{\boldsymbol{\theta}}$ for the auxiliary point pattern density as follows: During the burn-in phase of $10{,}000$ iterations, we compute after every $1{,}000$ iterations the MPLE $\hat{\boldsymbol{\theta}}^j = (\hat \beta^j, \hat \gamma^j)$ ($j=1,\ldots,10$) for $(\beta, \gamma)$ given the current labels $\mathbf{W}$ and use $\hat{\boldsymbol{\theta}}^j$ for the next $1{,}000$ iterations. We then use the component-wise mean $\hat{\boldsymbol{\theta}}:=\frac1{10}\sum_{j=1}^{10} \hat{\boldsymbol{\theta}}^j$ as value for $\hat{\boldsymbol{\theta}}$ for the rest of the entire run (alternatively, one could take the mean in the natural parameter space). The MPLE can be efficiently computed using the \texttt{ppm} function of the \texttt{R} package \texttt{spatstat}~\cite{baddeley_spatial_2015}.
	
	We then use the samples from the posterior of the parameter $\boldsymbol{\theta}$ as well as the labels~$\mathbf{W}$. Trace plots and the estimated posterior densities for the Strauss parameters $(\beta, \gamma)$ for the example in Figure \ref{figure:04b} are shown in Figures \ref{figure:04d}, \ref{figure:04e} and \ref{figure:05}. Overall, we observe good mixing behaviour of the Markov chain even though the parameter $\hat{\boldsymbol{\theta}}$ for the auxiliary variable method was only determined heuristically. In all 20 cases the posterior distributions concentrate around the true parameters, exemplarily shown in Figure~\ref{figure:05}. In 11 out of 20 cases $\beta$ has been overestimated as in Figure  \ref{figure:04d} and in 12 out 20 cases $\gamma$ has been overestimated as in Figure \ref{figure:04e}. This suggests that our method is not substantially biased in one or the other direction. Since the $\lambda$ updates are drawn directly from the posterior distribution, we do not show any trace plots but simply remark that estimates concentrate well in the vicinity of the true value. 
	
	The univariate marginals of the posterior distribution of the label vector $\mathbf{W}$ are depicted in panel of Figure~\ref{figure:04c} in grey values ranging from certainly random (black) to certainly necessary (white).
	While overall the minutiae separation is not too far from the truth, minutiae in regions of large necessary minutiae intensity not violating the Strauss hard core condition are more likely classified as necessary. In contrast, minutiae in regions of low intensity or those lying very close to one another are more likely classified as random.
	
	We finally apply MiSeal to real minutiae patterns of real fingerprints, including the one depicted in Figure~\ref{figure:04a}. Again we obtain good mixing behaviour for the parameter estimation. 
	Figure~\ref{figure:posterior-means} depicts the estimated posterior means for $\boldsymbol{\theta}$ for the 20 considered fingerprints which are rather spread out in the parameter space. This suggests, even though we assume that they have the same interpretation for all fingerprints, that they also depend on unobserved quantities, such as quality and resolution of the image or manner of imprinting on the acquisition medium.
	
	For each of the data fingerprints we numerically approximated the theoretical PCF of the process fitted in terms of the posterior mean based on 100 draws from the model, cf.~Figure~\ref{figure:posterior-means}. The resulting PCFs are pooled and also depicted as a red dashed curve in Figure~\ref{figure:PCF}. We observe that the PCFs of the fitted processes run mostly within the pointwise confidence band, however it seems clear that the $R$ chosen based on a pilot study is somewhat too small. The preselection of the interaction radii $h$ and $R$ leaves potential for future research. It may be desirable to adjust $R$ using more sophisticated characteristics of the individual fingerprint than just the average inter-ridge distance. At small distances, the PCFs tend to be slightly above the confidence band due to Poisson minutiae forming close pairs with other minutiae. This might not be very relevant in practice, and in any case it is hard to avoid this behaviour without making the model mathematically much more difficult. As a final remark, we note that a global way of assessing the model fit might be more desirable. One approach in this direction is given by the global envelope-based goodness-of-fit tests in \cite{myllymaki2017global} and \cite{myllymaki2019get}, which have been developed for a single observation of a point pattern. An extension of the approach provided there to replicated point patterns is beyond the scope of this paper.
	
	\begin{figure}[h!]
		\centering
		\begin{subfigure}{.49\linewidth}
			\centering
			\includegraphics[width=\linewidth]{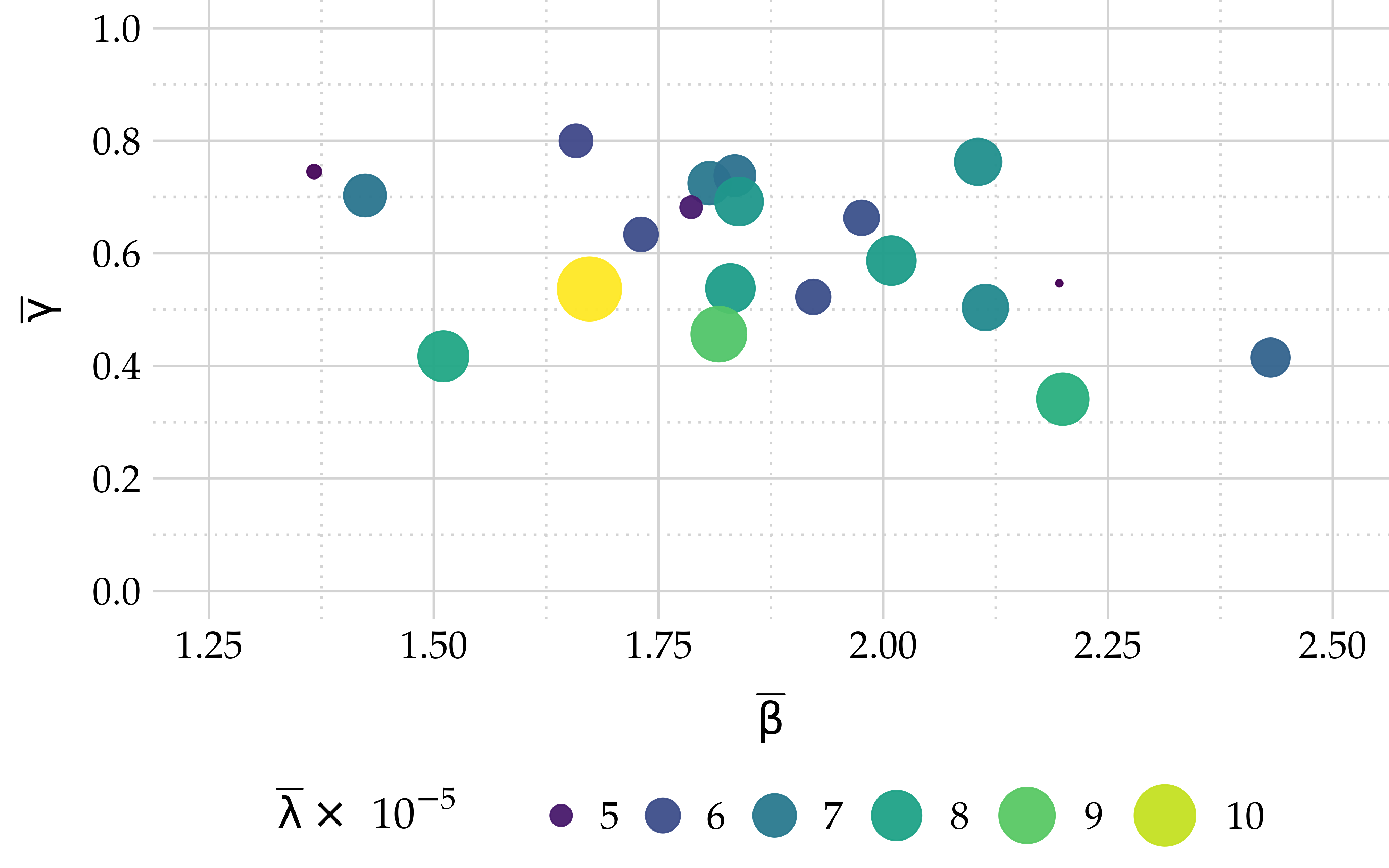}
			\caption{}
		\end{subfigure}%
		\hfill
		\begin{subfigure}{.49\linewidth}
			\centering
			\includegraphics[width=\linewidth]{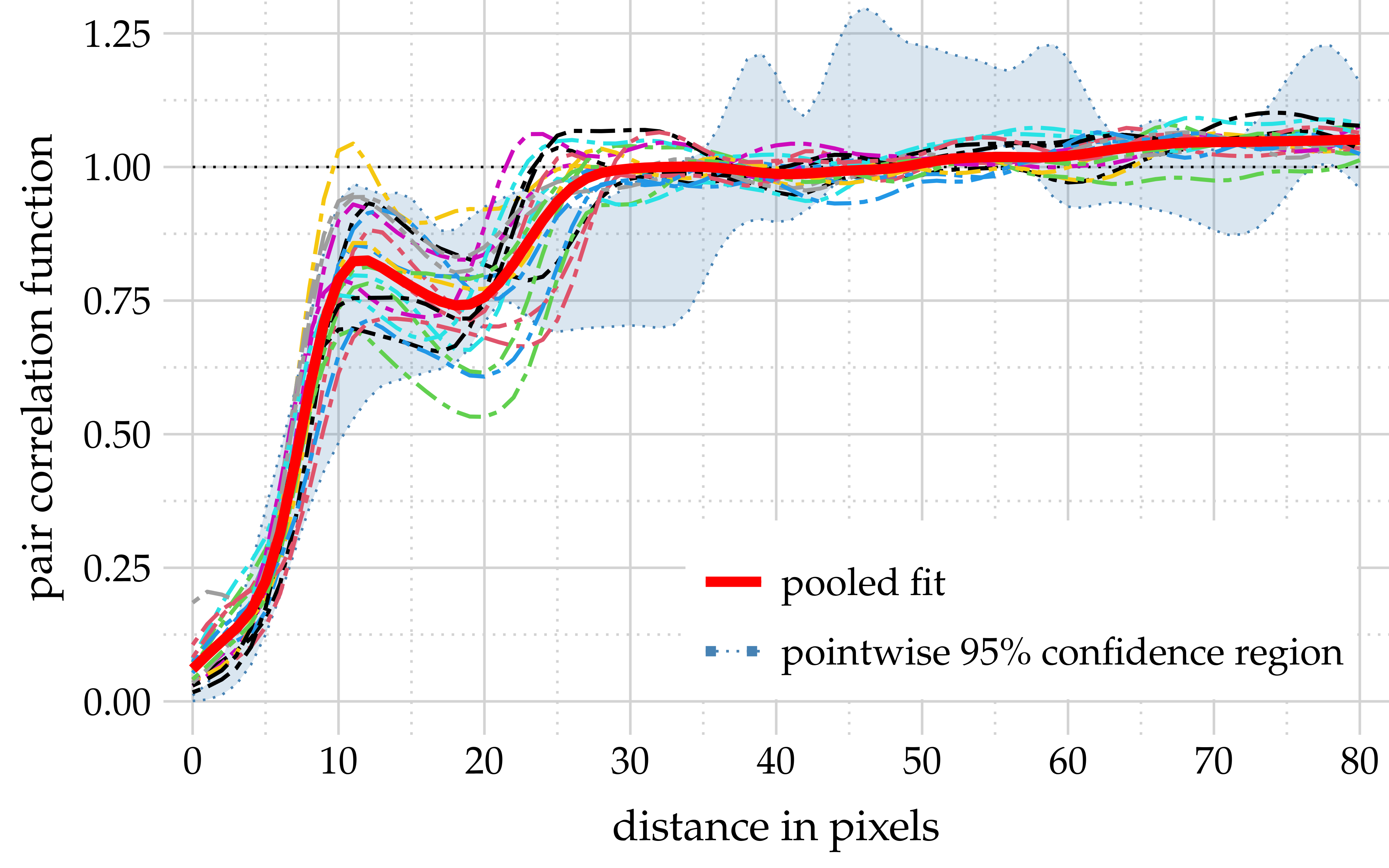}
			\caption{}
		\end{subfigure}
		\caption{\it Left: Posterior means $\bar{\boldsymbol{\theta}} = (\bar\beta, \bar\gamma,\bar\lambda)$ of $\boldsymbol{\theta} = (\beta,\gamma, \lambda)$ of the 20 fingerprints from Section~\ref{sec:exofrandom}. The value of $\bar\lambda$ is indicated as the size and colour of the bubble. Right: PCFs of the 20 fitted models (dashed) using the posterior means and their pooled PCF (red). The pointwise confidence region of the data from Figure~\ref{figure:PCF} is shaded in blue.}
		\label{figure:posterior-means}
	\end{figure}
	
	Moreover, we observe that the marginal posteriors of $\mathbf{W}$ are not independent. Considering minutiae pairs within interaction distance $R$ from each other, Fisher's exact test always rejects the hypothesis of independence at the $1\%$ level.
	
	As an example, Table~\ref{tab:1} gives the sampled posterior frequencies of the different label pairs for the two minutiae marked on Figure~\ref{figure:04a} with $\oplus$ and $\otimes$ (halfway north-west from the whorl).
	In parentheses are the expected frequencies under independence. Notably, we thinned the run by 100 (a little more than the integrated auto-correlation time), resulting  in approximately independent subsamples. We partitioned these subsamples in 100 batches and computed batch-wise \cite{matthews1975comparison} correlation coefficients to obtain a Monte Carlo estimate of the dependence between $\mathbf{W}_\oplus$ and $\mathbf{W}_\otimes$. This yields a correlation of --0.093 on average (with standard error of 0.007), suggesting negative correlation. Moreover, we compute the Kullback-Leibler divergence of the joint distribution of $(\mathbf{W}_\oplus,\mathbf{W}_\otimes)$ to the closest independent distribution resulting in a Kullback-Leibler divergence of 0.0069 (using base-2 logarithms).
	
	\begin{table}[h!]
		\centering
		\begin{tabular}{r|rr|r}
			\backslashbox{$\mathbf{W}_\oplus$}{$\mathbf{W}_\otimes$} & 0                   & 1                   & total \\ \hline
			0            & \phantom{2,3}73 \phantom{2,}(172) & \phantom{6,}636 \phantom{6,}(537) & \phantom{9,}709 \\
			1            & 2,349 (2,250) & 6,942 (7,041) & 9,291 \\ \hline
			total        & 2,422              & 7,578             & 10,000   
		\end{tabular}
		\caption{\label{tab:1}\it Contingency table of two selected components of $\mathbf{W}$  (labelled $\oplus$ and $\otimes$ in Figure~\ref{figure:04a}, halfway north-west from the whorl) whose minutiae lie within interaction distance $R$, with frequencies under hypothesis of independence in parentheses.}
	\end{table}
	
	Judging from this analysis, it seems important to consider the whole distribution of $\mathbf{W}$ provided by Algorithm~\ref{algo:MCMC}, rather than only the marginals obtained by the method in \cite{rajala_variational_2016}.
	
	\section{Random Minutiae Are Characteristic}
	\label{sec:proof}
	
	Here we consider the two fingerprints from \cite{newman_finger_1930}, already shown in Figure~\ref{figure:twin-fingerprints}. At first glance, they appear very similar based on their OFs, but actually stem from two different persons.
	
	As before, we enhance these images and extract the minutiae manually. Then, we approximate the posterior distributions $\pi_1$, $\pi_2$ of the label vectors $\mathbf{W}$ with our MiSeal (Section~\ref{sec:mcmc}). Their marginal probabilities are depicted in the right bottom of Figure~\ref{figure:Intensity-Twins}. In particular the north-east part of the right print in Figure~\ref{figure:Intensity-Twins} contains candidates for random minutiae (from grey to black), that are not found on the left print. Notably, such candidates tend to cluster which indicates their high correlation. In a realization, however, within such a cluster only as many minutiae will be random as the sum of marginal probabilities indicates, so that most clusters disappear (compare also Figure \ref{figure:04b} with Figure \ref{figure:04c}), making the random minutiae pattern Poissonian.
	
	\begin{figure}[b!]
		\centering
		\begin{subfigure}{.49\textwidth}
			\centering
			\includegraphics[width = \linewidth]{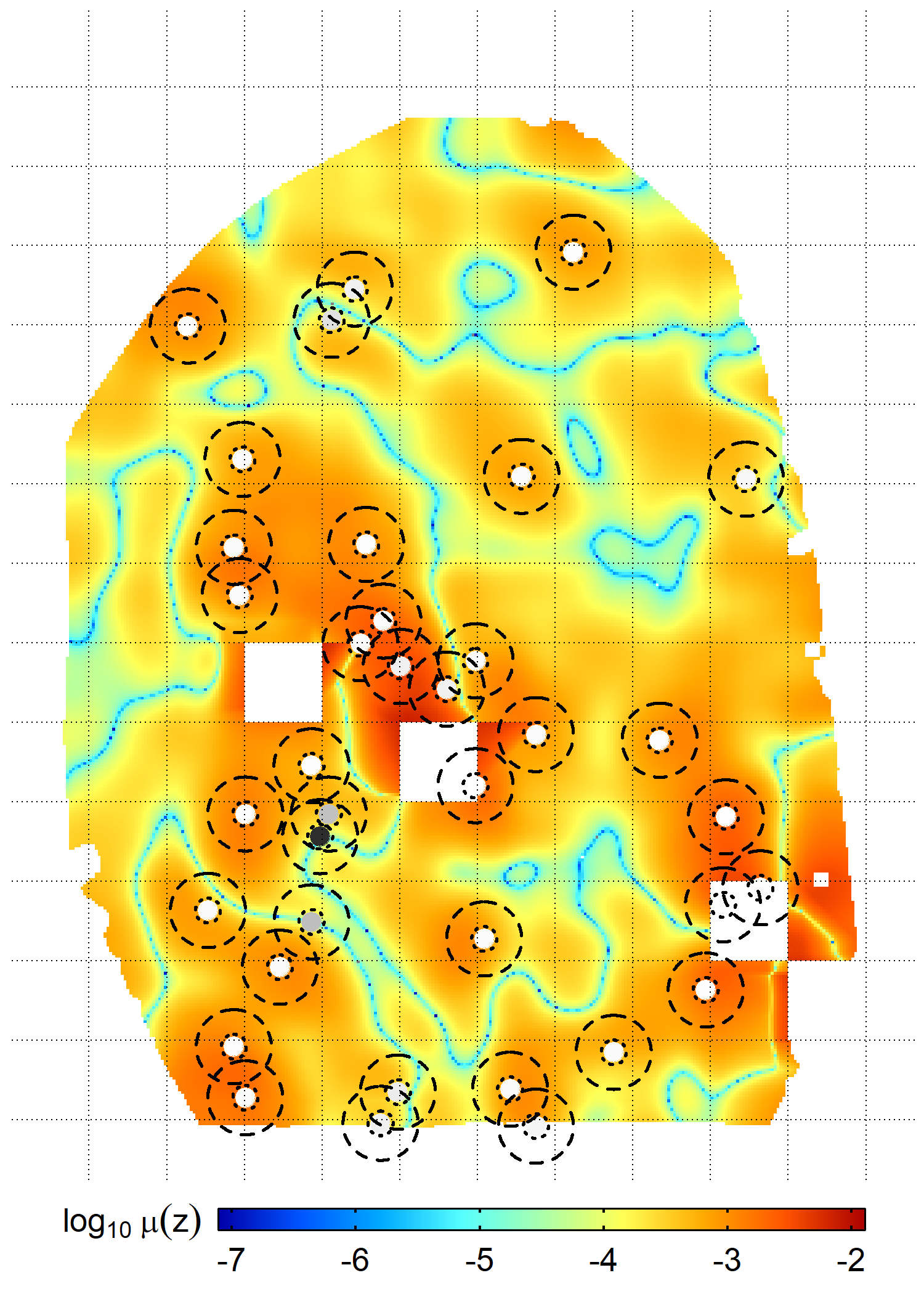}
		\end{subfigure}%
		\hfill
		\begin{subfigure}{.49\textwidth}
			\includegraphics[width = \linewidth]{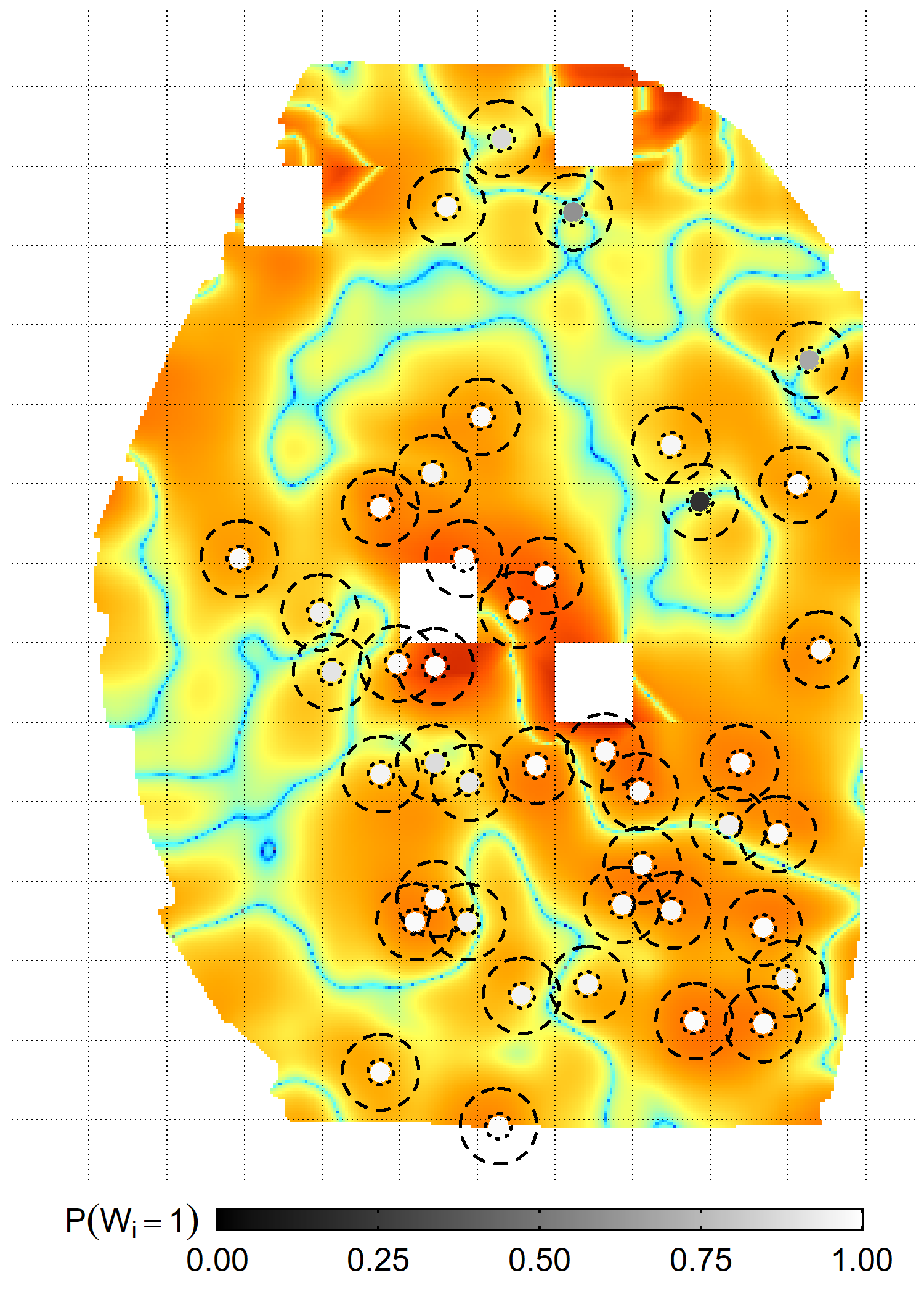}
		\end{subfigure}
		\caption{\textit{The necessary minutiae intensities of the two twin 
				fingerprints from Figure~\ref{figure:twin-fingerprints} as heat maps on 
				the log-scale from blue (low) to red (high). Marginal posterior 
				probabilities $\pi_i$ are indicated on a greyscale from black (probably 
				random) to white (probably necessary). The dashed circles have radius 
				$R/2$, i.e.\ intersecting circles indicate minutiae within interaction 
				distance $R$. Note that the intensity is computed patch-wise (lattice 
				indicated as dotted lines) and patches containing singularities were 
				excluded.}}
		\label{figure:Intensity-Twins}
	\end{figure}
	
	For comparison of two minutiae patterns we use the Minutiae Cylinder Code (MCC) matching algorithm obtained from \cite{cappelli_mcc_2010}, which is publicly available. The MCC compares two given minutiae templates $\zeta^{(1)},\zeta^{(2)}$ exploiting local information, i.e. spatial and directional similarity of minutiae and their neighbourhood (cylinders) and combines the most similar cylinders to a global score $S(\zeta^{(1)}, \zeta^{(2)}) \in [0,1]$ where 1 means very similar and 0 means very different. In order to assess \emph{characteristicness} of random minutiae, we investigate whether deleting random minutiae leads to more similar fingerprints than deleting the same number of arbitrary minutiae. For this, we repeat the following procedure $1{,}000$ times:
	\begin{enumerate}[label = (\arabic*)]
		\item Draw a sample $\mathbf{W}^{(i)} \sim \pi_i$ from the posterior of the labels and let $r^{(i)}$ be the number of random minutiae in the minutiae template $\zeta^{(i)}$, $i=1,2$.
		\item Delete from the minutiae template $\zeta^{(i)}$ the minutiae labelled as random under $\mathbf{W}^{(i)}$ to obtain a new template $\zeta^{(i,n)}$ containing only the necessary minutiae, $i=1,2$.
		\item Draw uniformly at random $r^{(i)}$ minutiae from $\zeta^{(i)}$ and delete them from $\zeta^{(i)}$ to obtain a new template $\zeta^{(i,r)}$ having the same number of minutiae as $\zeta^{(i,n)}$, $i=1,2$.
		\item Compute the matching scores $S^{(n)} :=S(\zeta^{(1,n)}, \zeta^{(2,n)})$ and $S^{(r)} := S(\zeta^{(1,r)}, \zeta^{(2,r)})$ using the MCC. 	
	\end{enumerate}
	
	We then compute the differences between these 1{,}000 pairs of matching scores. Note that local clusters of minutiae, which make a major contribution to the MCC matching score, are often dissolved by our deletion scheme, leading to scores on a very small scale. We therefore consider the relative score differences in Figure~\ref{figure:score-differences}. We obtain a Monte Carlo estimate of 93.6\% (with standard error of 0.8\%) that  the matching score after deletion of random minutiae is larger than the score after deleting the same number of minutiae at random. 
	Similarly, the Monte Carlo estimate for the relative difference of scores yields  a 23.7\% improvement (with standard error of 0.6\%).
	
	\begin{figure}[h!]
		\centering
		\includegraphics[width=\linewidth]{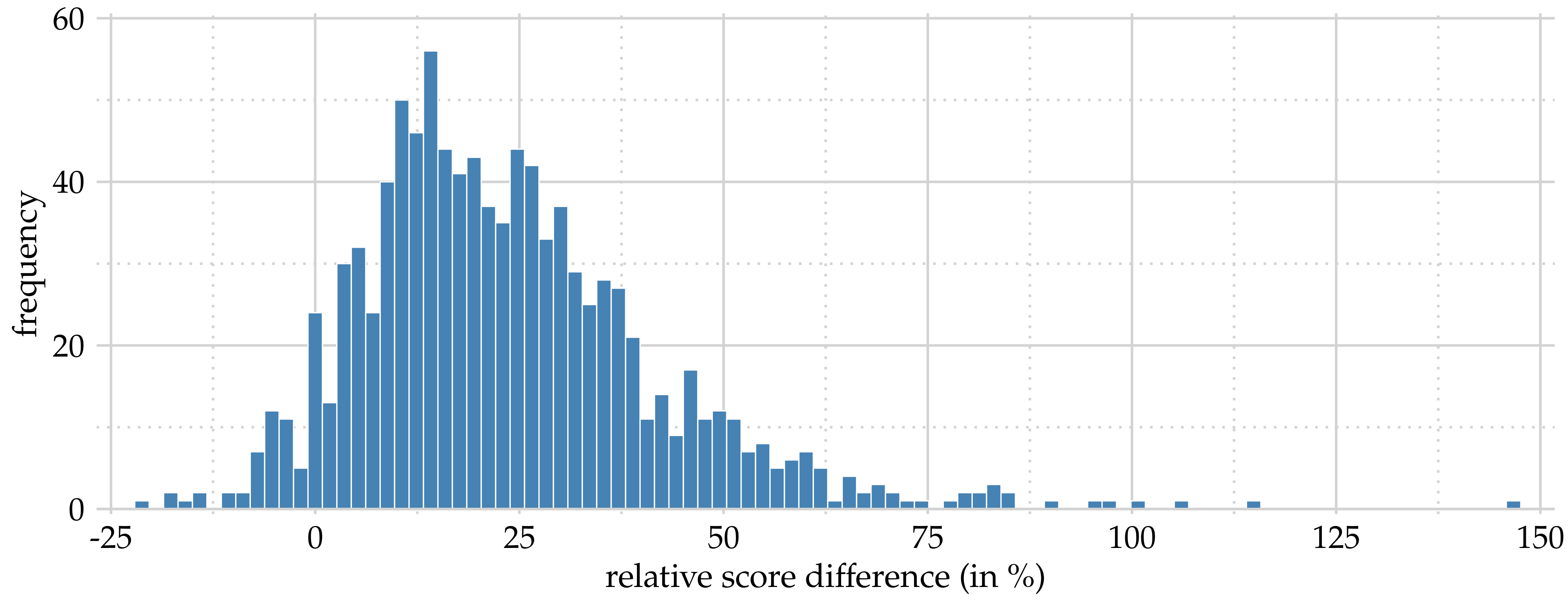}
		\caption{\textit{Histogram of relative score differences $\left( S^{(n)}-S^{(r)} \right) \slash S^{(r)}$ (in \%). Overall 93.6\% of these distances are positive and the average difference is 23.7\%.}}
		\label{figure:score-differences}
	\end{figure}
	
	Hence, we may conclude that the two different fingerprints become more similar to one another after deleting the random minutiae in comparison to just randomly deleting minutiae. This preliminary proof of concept suggests that for fingerprints with similar OFs the information encoded in random minutiae is \emph{characteristic} to distinguish them from one another.
	
	\section{Discussion}
	\label{sec:discussion}
	
	We have introduced a model which provides a formula for predicting locally the number of necessary minutiae determined by OF and RF of a fingerprint. In a statistical analysis, based on 20 high quality images, we have found that fingerprints feature additional \emph{random} minutiae. By considering the pair correlation function for the same data, we have concluded that it is reasonable to model the necessary minutiae by a Strauss process with hard core, while the additional random minutiae can be modelled by a homogeneous Poisson point process.
	
	For the independent superposition of the two processes, we can apply an MCMC algorithm for exploring the distribution of necessary and random minutiae of a given fingerprint as well as, simultaneously, the model parameters. The proposed MiSeal  (Section~\ref{sec:mcmc}) is based on the work of \cite{redenbach_classification_2015} but provides significant improvements in terms of mixing times and does not need assumptions on the independence of the components of the label vector as in \cite{rajala_variational_2016}. A crucial ingredient for good mixing seems to be a good choice of the marginal distribution for the auxiliary point pattern. The associated parameter $\hat{\boldsymbol{\theta}}$ also has to be chosen appropriately which we achieve by repeated estimation during the burn-in phase. As a future improvement, updating of $\hat{\boldsymbol{\theta}}$ can also be performed during the entire run if we let the adjustments diminish and adapt the Hastings ratio for the $\mathbf{W}$-update appropriately. 
	
	It turned out that, for two similar yet different fingerprints, excluding random minutiae yields a highly significant improvement of the similarity score as compared to excluding arbitrary minutiae. This suggests that the random minutiae carry characteristic information of fingerprint individuality going beyond OFs and RFs, which is why we refer to them as \emph{characteristic minutiae}. 
	
	The extent to which this information can effectively be used for discriminating different fingerprints with similar OFs is the subject of current and future research. One important ingredient will be sufficiently robust minutiae extraction.
	
	Additionally, various parameters of our MiSeal, for instance the smoothing of the necessary minutiae intensity, can be more finely tuned based on larger data sets. Eventually, we expect that including the degree of characteristicness of minutiae will improve error rates of minutiae matching algorithms.
	
	\section*{Acknowledgements}
	
	Johannes Wieditz gratefully acknowledges support by the DFG Research Training Group 2088 ``Discovering structure in complex data: Statistics meets Optimization and Inverse Problems''. The first and the last author further gratefully acknowledge support by the Niedersachsen Vorab of the Volkswagen foundation and the Felix-Bernstein-Institute of Mathematical Statistics in the Biosciences. Yvo Pokern also wishes to thank the Royal Society for International Exchanges grant IE150666. We thank Claudia Redenbach for helpful discussions and for providing the code from \cite{redenbach_classification_2015}, \cite{rajala_variational_2016}. The authors are also grateful to Corvin Grigutsch for writing large parts of the software. In conclusion, we thank the three anonymous referees for their very helpful comments improving the paper. 
	
	{
		\bibliographystyle{rss}
		\bibliography{references}
	}
	
\end{document}